\documentclass[11pt]{article}

\usepackage[T1]{fontenc}
\usepackage[utf8]{inputenc}
\usepackage{lmodern}
\usepackage[a4paper,margin=1in]{geometry}
\usepackage{microtype}
\usepackage{amsmath,amssymb,amsthm,mathtools}
\usepackage{booktabs}
\usepackage{enumitem}
\usepackage{xcolor}
\usepackage{tikz}
\usetikzlibrary{arrows.meta,positioning,fit,backgrounds}
\usepackage{hyperref}
\usepackage[nameinlink,noabbrev]{cleveref}
\crefname{theorem}{Theorem}{Theorems}
\Crefname{theorem}{Theorem}{Theorems}
\crefname{lemma}{Lemma}{Lemmas}
\Crefname{lemma}{Lemma}{Lemmas}
\crefname{definition}{Definition}{Definitions}
\Crefname{definition}{Definition}{Definitions}

\hypersetup{
  colorlinks=true,
  linkcolor=blue!55!black,
  citecolor=blue!55!black,
  urlcolor=blue!55!black,
  pdftitle={Tight Hardness for Temporal Path Covers at Vertex-Cover Number Two},
  pdfauthor={Alp Par}
}

\newtheorem{theorem}{Theorem}[section]
\newtheorem{lemma}{Lemma}[section]
\newtheorem{proposition}{Proposition}[section]
\newtheorem{corollary}{Corollary}[section]
\theoremstyle{definition}
\newtheorem{definition}{Definition}[section]

\newcommand{\TPC}{\textsc{TPC}}
\newcommand{\TDPC}{\textsc{TDPC}}
\newcommand{\DNMTS}{\textsc{DNMTS}}
\newcommand{\vc}{\operatorname{vc}}

\title{\textbf{Tight Hardness for Temporal Path Covers\\at Vertex-Cover Number Two}}
\author{Alp Par\\
Istanbul Topkapı University\\
\href{mailto:alppar@topkapi.edu.tr}{alppar@topkapi.edu.tr}\\
\href{https://alppar.com}{alppar.com}}
\date{}

\begin{document}
\maketitle

\begin{abstract}
Temporal Path Cover (\TPC{}) and Temporally Disjoint Path Cover (\TDPC{}) ask for minimum-cardinality covers of a temporal digraph by temporal paths, with \TDPC{} additionally requiring pairwise temporal disjointness. Cioni et al. proved both problems NP-hard on temporal DAGs whose underlying undirected graph has vertex-cover number three and left the vertex-cover-two case open. We resolve this question affirmatively for both problems. Using polynomial-time reductions from Distinct Numerical Matching with Target Sums, we show that \TPC{} and \TDPC{} are NP-complete already when the underlying undirected graph has vertex-cover number exactly two. For \TPC{}, this yields an exact vertex-cover threshold: polynomial-time solvability at vertex-cover number at most one and NP-completeness at two. For \TDPC{}, we further study the boundary case of temporal oriented stars. A reuse-normalization argument reduces optimum solutions to leaf-unique two-edge merges. If either the incoming or outgoing side is single-label, the optimum remains polynomial-time computable even when the opposite side is multi-label. We also introduce a span parameter for feasible merge realizations and prove that any feasible family of $k$ merges forces a realization of span at least $2k-2$. Hence \TDPC{} on temporal oriented stars is in XP parameterized by maximum span, and is polynomial-time solvable for every fixed span bound. The unrestricted multi-label star case remains open.
\end{abstract}

\section{Introduction and contributions}

Temporal Path Cover and Temporally Disjoint Path Cover were introduced by Chakraborty et al.~\cite{chakraborty2024} as temporal analogues of classical path-cover problems on directed acyclic graphs. Their results already reveal a sharp algorithmic contrast: \TPC{} is polynomial-time solvable on temporal oriented trees, whereas \TDPC{} remains NP-hard on that class. Cioni et al.~\cite{cioni2026} subsequently developed parameterized results for both problems and proved NP-hardness on temporal DAGs whose underlying undirected graph has vertex-cover number three. As of this writing, that work is available as the arXiv preprint arXiv:2607.00118v4. Their concluding discussion explicitly asks whether this bound is tight, namely whether either problem remains NP-hard when the vertex-cover number is at most two. Their positive FPT result for \TPC{} parameterized jointly by the vertex-cover number and the number of time steps is fully consistent with our hardness result: in our reductions the maximum time label $t_{\max}$ is not bounded by a constant parameter but grows polynomially with the source instance.

\subsection*{Related work and position of the result}
Temporal graphs have developed into a broad algorithmic framework in which classical notions of reachability, separation, and routing may change substantially once edge availability is time dependent; for an algorithmic overview, see Michail~\cite{michail2016}. Temporally disjoint routing was studied explicitly by Klobas et al.~\cite{klobas2021}, who showed strong hardness phenomena even for a small number of requested temporal paths, and by Kunz, Molter, and Zehavi~\cite{kunz2023}, who refined the picture under restrictions on the underlying graph. Related temporal separation problems likewise exhibit sharp structural complexity transitions~\cite{zschoche2020}. Against this background, Chakraborty et al.~\cite{chakraborty2024} introduced the path-cover problems considered here and established the first tractability and hardness frontier for temporal DAGs. Cioni et al.~\cite{cioni2026} then showed that both problems remain hard at constant vertex-cover number, with their construction attaining vertex-cover number three. The present work addresses exactly the remaining boundary value posed in their concluding open question.

We close that case for both problems. Our main results are the following.

\begin{theorem}[TPC at vertex-cover number two]\label{thm:tpc}
The decision version of \TPC{} is NP-complete on temporal DAGs $G$ satisfying
\[
\vc(U_G)=2.
\]
\end{theorem}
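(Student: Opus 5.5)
Membership in NP is immediate: a certificate is a family of at most $k$ temporal paths. We check in polynomial time that each path is a valid temporal path, i.e., its edges exist with strictly increasing labels, and that every vertex is covered. Each path has at most $|V|$ vertices because $G$ is a DAG, so the certificate has polynomial size.

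For hardness I will reduce from Distinct Numerical Matching with Target Sums (\DNMTS{}). An instance consists of distinct integers $x_1,\dots,x_n$ and $y_1,\dots,y_n$ and targets $t_1,\dots,t_n$. The question is whether there are permutations $\alpha,\beta$ with $x_{\alpha(i)}+y_{\beta(i)}=t_i$ for all $i$. This problem is strongly NP-hard, so all numbers may be assumed polynomially bounded, and they can serve directly as time labels.

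The graph will have exactly two hub vertices $a$ and $b$. Every other vertex is a leaf or degree-two vertex attached only to $\{a,b\}$, so $U_G$ has vertex cover $\{a,b\}$. I will also include an edge between $a$ and $b$ (or two disjoint edges) so that $\vc(U_G)=2$ exactly.

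\begin{itemize}
\item For each $x_j$, add an in-leaf $u_j$ with arc $u_j\to a$ at time $x_j$.
\item For each target $t_i$, add a middle vertex $w_i$ with arcs $a\to w_i$ and $w_i\to b$.
\item For each $y_j$, add an out-leaf $v_j$ with arc $b\to v_j$.
\end{itemize}

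Since each $w_i$ is adjacent only to $a$ and $b$, the degree-two vertices connect the hubs. The labels are chosen so that a path $u_j\to a\to w_i\to b\to v_\ell$ is temporal if and only if $x_j+y_\ell\le t_i$. To achieve this I encode with a large offset $M$:
\begin{itemize}
\item $u_j\to a$ at time $x_j$;
\item $a\to w_i$ at time $M$;
\item $w_i\to b$ at time $M+t_i$;
\item $b\to v_\ell$ at time $M+t_i+\dots$
\end{itemize}
A single label on $b\to v_\ell$ cannot depend on $i$, so I would instead use complemented values: $b\to v_\ell$ at time $2M - y_\ell$ and $w_i\to b$ at time $2M - t_i + \dots$. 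With labels on $u_j\to a$ equal to $x_j$, and the passage through $w_i$ encoded as $a\to w_i$ at $t_i - c$ and $w_i\to b$ at a label tied to the partner, the temporal constraints become $x_j < $ (something in $t_i$) and (something in $t_i$) $< M - y_\ell$. This yields only separate threshold inequalities, not a sum.

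A single chain therefore cannot encode a sum. I will use the fact that at vertex cover two a path may pass through the hubs several times, alternating $a$ and $b$ via distinct degree-two vertices, and that the cover size forces every hub visit to be used. Concretely, set $k=n$, so each path must collect exactly one $x$-leaf, one target gadget and one $y$-leaf. A counting argument then forces every constraint to be tight: the sum of $x_{\alpha(i)}+y_{\beta(i)}\le t_i$ over all $i$ equals $\sum t_i$, so all inequalities are equalities. This converts the one-sided bounds into the target-sum equalities.

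The \textbf{main obstacle} is designing labels so that a cover of size $n$ has exactly this rigid shape. Because $a$ and $b$ may be revisited, a single path could sweep several leaves. I will rule this out by time windows: all $x$-arcs lie in an early window, all target gadgets in a middle window, and all $y$-arcs in a late window. A temporal path then enters each window at most once, and every leaf forces its own endpoint. Finally, the equivalence is proved in both directions. A solution $(\alpha,\beta)$ gives $n$ explicit paths. Conversely, a cover of size $n$ induces bijections, and these satisfy the inequalities, which summation turns into equalities.
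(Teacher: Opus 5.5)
Your NP-membership argument is fine, but the hardness part has a genuine gap: you never give labels under which the temporal structure enforces $x_j+y_\ell\le t_i$, and neither of your proposed remedies can supply them.

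As you observe, attaching one number to each of $u_j$, $w_i$, $v_\ell$ and comparing consecutive labels along $u_j\to a\to w_i\to b\to v_\ell$ only yields separate thresholds. Passing through the hubs several times is impossible. Temporal paths are vertex-simple, and in a DAG no directed walk can return to $a$ or $b$ anyway. Since every arc meets $\{a,b\}$, two non-hub vertices can never be consecutive on a path. So every path has at most five vertices, and the shape leaf, $a$, middle, $b$, leaf is already the longest possible.

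The time-window idea removes even the thresholds. If all $u_j\to a$ labels precede all gadget labels, which precede all $b\to v_\ell$ labels, then whether a path through $w_i$ exists no longer depends on $j$ or $\ell$. The constructed instance then carries no information about the numbers. Your closing summation step is correct in itself, but it presupposes exactly the encoding that is missing.

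The missing idea is to attach each candidate pair sum to its own vertex, so that the additive condition becomes a single comparison of consecutive labels. The paper creates $n^2$ cells $\alpha_{i,h}$ with $\lambda(\alpha_{i,h},y)=\{C+2(a_i+b_h)\}$, compared at $y$ with $\lambda(y,\beta_k)=\{C+2t_k+1\}$. The real work is then to control which cells the main paths pick while the $n^2-n$ unused cells still get covered. This goes in four steps:
\begin{enumerate}
\item Filler sources $f_{r,\ell}$ enter $x$ late, at $D+2r-1$, and can reach row $i$ only when $r\le i$, through the late label $D+2i$.
\item A tight counting argument with $K=n^2$ gives the prefix quotas $\sum_{i\le q}m_i\ge q$. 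Together with $a_1>\cdots>a_n$, these give $\sum_i m_ia_i\ge\sum_i a_i$.
\item The early labels $2h$ against $2j-1$ give $h_j\ge j$, hence $\sum_j b_{h_j}\ge\sum_j b_j$.
\item The balance identity then forces equality throughout: one cell per row, $h_j=j$, and $a_{i_j}+b_j=t_{k_j}$.
\end{enumerate}
Nothing in your proposal plays the role of this pair encoding, the fillers, or the quota and majorization argument.
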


\begin{theorem}[TDPC at vertex-cover number two]\label{thm:tdpc}
The decision version of \TDPC{} is NP-complete on temporal DAGs $G$ satisfying
\[
\vc(U_G)=2.
\]
\end{theorem}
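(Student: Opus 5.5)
The plan is a polynomial-time reduction from \DNMTS{}, with membership in NP immediate: a certificate is a family of at most $k$ temporal paths, and checking that each is a temporal path, that they cover all vertices, and (for \TDPC{}) that they are pairwise temporally disjoint is polynomial. I state the source problem in the normalized form I intend to use. We are given three sets $X=\{x_1,\dots,x_n\}$, $Y=\{y_1,\dots,y_n\}$, $Z=\{z_1,\dots,z_n\}$ of positive integers, each with pairwise distinct elements, satisfying $\sum X+\sum Y=\sum Z$. The question is whether there are bijections $\sigma,\pi$ with $x_i+y_{\sigma(i)}=z_{\pi(i)}$ for all $i$. I would cite this variant, which is strongly NP-hard so that all numbers are polynomially bounded, rather than reprove it. If the cited form lacks the sum condition, I would add it, since it is necessary for a yes-instance and checkable in linear time.

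The construction uses two hub vertices $a$ and $b$ together with three groups of non-hub vertices:
\begin{itemize}
\item source leaves $u_1,\dots,u_n$, each adjacent only to $a$;
\item middle vertices $m_1,\dots,m_n$, each adjacent to both $a$ and $b$;
\item sink leaves $w_1,\dots,w_n$, each adjacent only to $b$.
\end{itemize}
Every edge meets $\{a,b\}$, and there is no edge between $a$ and $b$, so $\{a,b\}$ is a vertex cover. No single vertex covers both $u_1a$ and $w_1b$, so $\vc(U_G)=2$ exactly. All arcs are oriented $u_i\to a\to m_j\to b\to w_k$, so $G$ is a DAG. Fix a scaling factor $L$ larger than $3$ times the largest number, and use the following labels:
\begin{itemize}
\item arc $u_i\to a$ gets label $Lx_i$;
\item arc $a\to m_j$ gets every label $Lx+1$ with $x\in X$;
\item arc $m_j\to b$ gets every label $L(x+y_j)+2$ with $x\in X$;
\item arc $b\to w_k$ gets the single label $Lz_k+3$.
\end{itemize}
The target is $k=n$.

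For completeness, given $\sigma,\pi$, I take the $n$ paths
\[
u_i \xrightarrow{Lx_i} a \xrightarrow{Lx_i+1} m_{\sigma(i)} \xrightarrow{L(x_i+y_{\sigma(i)})+2} b \xrightarrow{Lz_{\pi(i)}+3} w_{\pi(i)}.
\]
These are temporal paths, and together they cover every vertex. For \TDPC{} I must also check temporal disjointness. The only shared vertices are $a$ and $b$, and each path occupies them only at single times. At $a$ these times are the distinct values $Lx_i$ and $Lx_i+1$. At $b$ they are $Lz_{\pi(i)}+2$ and $Lz_{\pi(i)}+3$, which are distinct because $Z$ has distinct elements. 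This is exactly where the distinctness in \DNMTS{} is used.

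For soundness, the $2n$ leaves force the structure. Each $u_i$ has in-degree $0$, so it must start a path. Each $w_k$ has out-degree $0$, so it must end one. With only $n$ paths, every path therefore runs from some $u$ to some $w$, and each path contains exactly one $u$ and one $w$. Every $u$-to-$w$ path has the form $u\,a\,m\,b\,w$ and visits exactly one middle vertex. Since all $m_j$ must be covered by $n$ paths, each path visits a distinct $m_j$, so the paths induce bijections. Now trace a path $u_i\,a\,m_j\,b\,w_k$:
\begin{itemize}
\item leaving $a$ it uses a label $Lx+1\ge Lx_i$, so $x\ge x_i$;
\item entering $b$ it uses a label $L(x'+y_j)+2>Lx+1$, so $x'\ge x$;
\item leaving $b$ it needs $Lz_k+3>L(x'+y_j)+2$, and the choice of $L$ gives $z_k\ge x'+y_j\ge x_i+y_j$.
\end{itemize}
Summing $z_k\ge x_i+y_j$ over all $n$ paths and using $\sum Z=\sum X+\sum Y$ forces every inequality to be tight. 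This yields a valid matching. The argument never used disjointness, so the same construction proves both \cref{thm:tpc} and \cref{thm:tdpc}.

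The step I expect to be the main obstacle is the waiting problem. Temporal paths may wait and may pick later labels, which is why $a\to m_j$ and $m_j\to b$ carry many labels and why exact equality is recovered only through the global sum. The point to double-check is the interaction of the scaling with the strict time order: the inequality $L(x'+y_j)+2<Lz_k+3$ must imply $x'+y_j\le z_k$. This holds because both sides are offsets smaller than $L$ added to multiples of $L$. If the \TDPC{} disjointness definition also counts occupancy while waiting at a vertex, the completeness paths still never wait at $a$ or $b$, so nothing changes.
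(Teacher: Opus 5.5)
Your NP membership, vertex-cover, acyclicity and completeness arguments are fine. The soundness argument, however, breaks at your second bullet, and the construction itself is unsound. The condition $L(x'+y_j)+2>Lx+1$ is equivalent to $x'+y_j\ge x$, not to $x'\ge x$. The arc $m_j\to b$ carries $L(x'+y_j)+2$ for \emph{every} $x'\in X$. So a path that reaches $m_j$ at time $Lx_i+1$ can wait and leave with any $x'\in X$ satisfying $x'\ge x_i-y_j$, in particular with some $x'<x_i$. The sink condition then only says $z_k\ge x'+y_j$, the chain $z_k\ge x'+y_j\ge x_i+y_j$ fails, and summing no longer forces tightness.

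A concrete counterexample: take $n=2$, $X=\{1,10\}$, $Y=\{9,20\}$, $Z=\{15,25\}$. All six numbers are distinct and $1+10+9+20=40=15+25$. It is a no-instance, because the two possible pairings give the sum sets $\{10,30\}$ and $\{19,21\}$. Write $u_{(10)}$ for the source with $x_i=10$, $m_{(9)}$ for the middle vertex with $y_j=9$, and so on. Then the two paths
\[
u_{(10)}\xrightarrow{10L}a\xrightarrow{10L+1}m_{(9)}\xrightarrow{10L+2}b\xrightarrow{15L+3}w_{(15)}
\]
and
\[
u_{(1)}\xrightarrow{L}a\xrightarrow{L+1}m_{(20)}\xrightarrow{21L+2}b\xrightarrow{25L+3}w_{(25)}
\]
use only existing labels; the first leaves $m_{(9)}$ with $x'=1$. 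They cover all vertices and are temporally disjoint: they occupy $a$ at times $10L+1$ and $L+1$, and $b$ during $[10L+3,15L+3]$ and $[21L+3,25L+3]$. So your graph has a \TDPC{} (and a \TPC{}) solution with $n$ paths even though the \DNMTS{} instance is a no-instance.

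The missing idea is a way to make the encoding rigid under waiting. In your labeling both $x$ and the shift $y_j$ live on the same coarse scale $L$. Waiting at $m_j$ therefore lets a path rewind its encoded $x$-value, and this deviation \emph{decreases} the left side of the summed inequality, which no balance argument can exclude.

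The paper instead makes the coarse scale a block index. Block $h$ starts at $L_h=(h-1)H$ with $H=M+10$, and blocks correspond to $b_1<\dots<b_n$. All numerical data ($a_i+b_h$ on $\alpha_i\to y$, $t_k$ on $y\to\beta_k$) are small offsets inside a block, so waiting can only move a path to a later block. The remaining steps are then:
\begin{itemize}
\item Temporal disjointness at $x$ forces the $p_j$-path to leave $x$ in block $j$.
\item Guard paths $u_j\to y\to v_j$ occupy $y$ at the separator time $S_j=L_j+M+6$, which forbids entering and leaving $y$ in different blocks.
\item The only remaining deviation is waiting at a middle vertex until a block $q_j\ge j$. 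This replaces $b_j$ by $b_{q_j}\ge b_j$ and hence \emph{increases} the left side, so the balance identity eliminates it.
\end{itemize}
Disjointness is thus used essentially, and \TPC{} needs a different gadget (fillers and majorization). Your remark that disjointness was never used, so one construction would settle both problems, should have been a warning sign.
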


For \TPC{}, \cref{thm:tpc} is tight with respect to the vertex-cover number. Indeed, if $\vc(U_G)\le 1$, then after removing isolated vertices the underlying undirected graph is a star, hence an oriented tree; the polynomial-time algorithm of Chakraborty et al.~\cite{chakraborty2024} applies componentwise. We record the resulting dichotomy formally in \cref{cor:tpc-threshold}. For \TDPC{}, the exact status at vertex-cover number one is subtler. In \cref{sec:boundary} we develop a normalization framework for temporal oriented stars and prove two broader tractability results: the problem is polynomial whenever either side of the oriented star is single-label, even if the opposite side has arbitrarily many labels, and it is in XP when parameterized by maximum merge span. In particular, every fixed span bound yields a polynomial-time algorithm. The unrestricted two-sided multi-label star case remains open.

The two hardness reductions use different mechanisms. For \TPC, overlap between paths is allowed, so the construction enforces exact usage through a tight source-counting argument and filler paths. For \TDPC, temporal disjointness itself is used to lock paths to designated time blocks.

\begin{corollary}[Exact vertex-cover threshold for \TPC]
\label{cor:tpc-threshold}
On temporal DAGs, \TPC{} is polynomial-time solvable when $\vc(U_G)\le 1$ and NP-complete already when $\vc(U_G)=2$.
\end{corollary}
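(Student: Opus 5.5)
The corollary combines two facts: a tractability result for \(\vc(U_G)\le 1\), and the hardness half, which is exactly \cref{thm:tpc}. So the only new work is the polynomial-time algorithm when \(\vc(U_G)\le 1\). I will obtain it by reducing to the algorithm of Chakraborty et al.~\cite{chakraborty2024} for \TPC{} on temporal oriented trees.

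\emph{Structure of the underlying graph.} Let \(G\) be a temporal DAG with \(\vc(U_G)\le 1\).

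If \(\vc(U_G)=0\), then \(U_G\) has no edges. Every vertex must be covered by its own trivial path, so the optimum is \(|V(G)|\).

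If \(\vc(U_G)=1\), let \(\{c\}\) be a vertex cover. Every edge of \(U_G\) is incident to \(c\), so after discarding isolated vertices \(U_G\) is a star centred at \(c\). I also need to check that the star is simple. By the DAG assumption, \(G\) cannot contain both arcs \((c,v)\) and \((v,c)\), since these would form a directed 2-cycle. Parallel arcs between the same ordered pair that differ only in their time labels are a single undirected edge carrying several labels. This is exactly the multi-label temporal oriented tree model used in~\cite{chakraborty2024}. I will double-check that their model permits multiple labels per arc; if it does not, a label-splitting subdivision would be needed, and I must verify that such a subdivision preserves the star structure or at least the tree structure.

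\emph{Decomposition.} Isolated vertices each need one trivial path and interact with nothing else. Temporal paths never cross between connected components. Hence the optimum is additive over components:
\[
\mathrm{OPT}(G)=\#\{\text{isolated vertices}\}+\mathrm{OPT}(\text{star component}).
\]
Applying the oriented-tree algorithm to the star component and adding the isolated-vertex count solves \TPC{} in polynomial time. The decision version then follows by comparing this optimum with \(k\).

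\emph{Hardness and membership.} Membership in NP is part of \cref{thm:tpc}. For the NP-completeness at \(\vc(U_G)=2\) I cite \cref{thm:tpc} directly.

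The main obstacle is the model-compatibility check: confirming that a temporal DAG whose underlying graph is a star really falls within the input class of~\cite{chakraborty2024}, in particular with respect to multiple labels per arc.
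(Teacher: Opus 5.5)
Your proposal is correct and is essentially the paper's proof: the edgeless case is trivial, for $\vc(U_G)=1$ the non-isolated part is a star to which the oriented-tree algorithm of Chakraborty et al.\ applies (with isolated vertices handled separately), and NP-completeness at $\vc(U_G)=2$ is \cref{thm:tpc}. The model-compatibility worry you flag needs no label-splitting subdivision, because the paper's model, taken from Chakraborty et al., already assigns a label set $\lambda(e)$ to each arc, so a multi-label temporal oriented star is directly in their input class.
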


\begin{proof}
If $\vc(U_G)=0$, the graph is edgeless and the optimum is immediate. If $\vc(U_G)=1$, every edge of the underlying undirected graph is incident with one common vertex, so the non-isolated part is a star. Chakraborty et al.~\cite{chakraborty2024} give a polynomial-time algorithm for \TPC{} on temporal oriented trees; isolated vertices can be handled independently. NP-completeness at value two is \cref{thm:tpc}.
\end{proof}

\begin{table}[ht]
\centering
\caption{High-level comparison of the reductions.}
\begin{tabular}{@{}lll@{}}
\toprule
 & \TPC{} reduction & \TDPC{} reduction \\
\midrule
Source problem & Distinct NMTS & Distinct NMTS \\
Cover vertices & $x,y$ & $x,y$ \\
Main device & Cell matrix and fillers & Time blocks and guards \\
Global forcing & Majorization and total sum & Temporal conflicts and total sum \\
Underlying digraph & DAG & DAG \\
Target vertex cover & Exactly $2$ & Exactly $2$ \\
\bottomrule
\end{tabular}
\end{table}

\section{Preliminaries}

A temporal digraph is a pair $G=(U_G,\lambda)$, where $U_G=(V,E)$ is a directed graph and $\lambda(e)$ is the set of discrete time steps at which edge $e$ is available. All edges have unit traversal time. A strict temporal directed path is a vertex-simple directed path whose selected edge labels are strictly increasing. A temporal DAG is a temporal digraph whose underlying digraph $U_G$ is acyclic.

A temporal path cover is a collection of temporal paths whose union of vertex sets is $V$. In a temporally disjoint path cover, no two paths occupy the same vertex at the same time. Following Chakraborty et al.~\cite{chakraborty2024} and Cioni et al.~\cite{cioni2026}, if a path enters an internal vertex at time $r$ and leaves it at time $s$, then it occupies that vertex at every integer time in $[r+1,s]$. The first and last vertices are occupied at the departure and arrival time units specified in their definition.

For a static undirected graph $H$, let $\vc(H)$ denote the minimum size of a vertex cover. We apply this parameter to the underlying undirected graph obtained from $U_G$ by forgetting edge orientations.

\subsection{Source problem}

We reduce from Distinct Numerical Matching with Target Sums. Hulett, Will, and Woeginger proved that Numerical Matching with Target Sums remains strongly NP-complete when all $3n$ input elements are distinct~\cite{hulett2008}; see also the explicit formulation used by Cioni et al.~\cite{cioni2026}. Because the source problem is strongly NP-complete, we may assume that all numerical values in the instance, including the $a_i$, $b_j$, and $t_k$, are polynomially bounded in the input size. Consequently, every auxiliary quantity introduced below, including $M$, $C$, $D$, $H$, $L_j$, and $S_j$, and every temporal label derived from them, has polynomial encoding length. Each construction also contains only polynomially many vertices, arcs, and labels. Thus both mappings described below are polynomial-time many-one reductions producing instances of polynomial encoding size.

\begin{definition}[Distinct Numerical Matching with Target Sums]
An instance of \DNMTS{} consists of three sets of pairwise distinct positive integers
\[
A=\{a_1,\ldots,a_n\},\qquad
B=\{b_1,\ldots,b_n\},\qquad
T=\{t_1,\ldots,t_n\},
\]
satisfying
\[
\sum_{i=1}^{n}a_i+\sum_{j=1}^{n}b_j=\sum_{k=1}^{n}t_k.
\tag{1}\label{eq:balance}
\]
The question is whether there are permutations $\pi$ and $\sigma$ of $[n]$ such that
\[
a_{\pi(j)}+b_j=t_{\sigma(j)}
\qquad\text{for every }j\in[n].
\]
\end{definition}

Reordering elements within any of the three sets does not change the instance. We will use a strictly decreasing order for $A$ when needed and a strictly increasing order for $B$.

\section{NP-completeness of TPC}\label{sec:tpc}

We prove \cref{thm:tpc}. Sort the input sets so that
\[
a_1>a_2>\cdots>a_n,
\qquad
b_1<b_2<\cdots<b_n.
\tag{2}\label{eq:orders}
\]

\subsection{Construction}

Create two distinguished vertices $x$ and $y$. For each $j\in[n]$, create a main source $p_j$. For every $r\in[n]$ and $\ell\in[n-1]$, create a filler source $f_{r,\ell}$. For every $i,h\in[n]$, create a cell vertex $\alpha_{i,h}$, and for every $k\in[n]$, create a target vertex $\beta_k$.

There are $n+n(n-1)=n^2$ sources. Set
\[
K=n^2.
\tag{3}\label{eq:tpcK}
\]

Let
\[
M=\max\left\{\max_{i,h}(a_i+b_h),\max_k t_k\right\},
\qquad C=2n+10,
\qquad D=C+2M+10.
\]
Add the following arcs and label sets:
\begin{align}
\lambda(p_j,x)&=\{2j-1\} &&(j\in[n]), \tag{4}\label{eq:tpcmainin}\\
\lambda(f_{r,\ell},x)&=\{D+2r-1\} &&(r\in[n],\ \ell\in[n-1]), \tag{5}\label{eq:tpcfillin}\\
\lambda(x,\alpha_{i,h})&=\{2h,\ D+2i\} &&(i,h\in[n]), \tag{6}\label{eq:tpccell}\\
\lambda(\alpha_{i,h},y)&=\{C+2(a_i+b_h)\} &&(i,h\in[n]), \tag{7}\label{eq:tpccelly}\\
\lambda(y,\beta_k)&=\{C+2t_k+1\} &&(k\in[n]). \tag{8}\label{eq:tpcout}
\end{align}

\begin{figure}[ht]
\centering
\begin{tikzpicture}[>=Latex,node distance=12mm and 18mm,
  v/.style={circle,draw,minimum size=7mm,inner sep=1pt},
  box/.style={draw,rounded corners,minimum width=26mm,minimum height=13mm,align=center},
  lab/.style={font=\scriptsize,align=center}]
\node[v] (x) {$x$};
\node[box,left=24mm of x,yshift=9mm] (main) {main sources\\$p_j$};
\node[box,left=24mm of x,yshift=-9mm] (fill) {filler sources\\$f_{r,\ell}$};
\node[box,right=18mm of x] (cells) {$n\times n$ cells\\$\alpha_{i,h}$};
\node[v,right=18mm of cells] (y) {$y$};
\node[box,right=18mm of y] (targets) {targets\\$\beta_k$};
\draw[->] (main.east) -- node[above,lab] {$2j-1$} (x.west);
\draw[->] (fill.east) -- node[below,lab] {$D+2r-1$} (x.west);
\draw[->] (x) -- node[above,lab] {$2h$ or $D+2i$} (cells);
\draw[->] (cells) -- node[above,lab] {$C+2(a_i+b_h)$} (y);
\draw[->] (y) -- node[above,lab] {$C+2t_k+1$} (targets);
\end{tikzpicture}
\caption{Layered structure of the \TPC{} construction. Every edge is incident with $x$ or $y$.}
\label{fig:tpc}
\end{figure}

The underlying digraph has the topological order
\[
\{p_j,f_{r,\ell}\}\prec x\prec\{\alpha_{i,h}\}\prec y\prec\{\beta_k\}.
\]
Hence it is a DAG. Every edge is incident with $x$ or $y$, so $\vc(U_G)\le2$. The undirected edges $p_1x$ and $\alpha_{1,1}y$ are vertex-disjoint, so no single vertex covers all edges. Thus $\vc(U_G)=2$.

\subsection{Tight structure}

\begin{lemma}\label{lem:tpc-tight}
Every temporal path cover of size at most $K$ has exactly $K$ paths. Every main-source path contains one cell and one target, every filler-source path contains one cell, and every source, cell, and target vertex occurs on exactly one path.
\end{lemma}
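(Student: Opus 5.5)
Each source ($p_j$ and $f_{r,\ell}$) has in-degree zero, so it can only be the first vertex of a path. Hence no path contains two sources. Since there are $n^2=K$ sources and each must be covered, a cover of size at most $K$ has exactly $K$ paths, one starting at each source. Every path therefore begins at a distinct source, and each source lies on exactly one path.

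Next I will determine what each path can contain after leaving its source. A path from a source goes to $x$, then possibly to a single cell $\alpha_{i,h}$; it cannot return to $x$ because the graph is a DAG. It may then continue to $y$ and on to a single target $\beta_k$. So each path contains at most one cell and at most one target.

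I will then check which continuations the labels allow.
\begin{itemize}[nosep]
\item \emph{Filler path from $f_{r,\ell}$.} It reaches $x$ at time $D+2r-1$. It can leave $x$ only via label $D+2i$ with $i\ge r$; the labels $2h$ are all smaller than $D$. Every arc $\alpha_{i,h}y$ carries label $C+2(a_i+b_h)\le C+2M<D$, so the path cannot continue to $y$. A filler path thus covers at most one cell and no target.
\item \emph{Main path from $p_j$.} It reaches $x$ at time $2j-1$. It may use label $2h$ with $h\ge j$, or a label $D+2i$. Using $D+2i$ would again prevent continuing to $y$. A main path can reach $y$ only via $2h<C+2(a_i+b_h)$, which holds since $C>2n$. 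It then reaches $\beta_k$ only if $C+2(a_i+b_h)<C+2t_k+1$. So a main path covers at most one cell and at most one target.
\end{itemize}

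Finally I will count. There are $n^2$ cells and $n$ targets, and only main paths can cover targets. Each of the $n$ main paths covers at most one target, so every main path must contain exactly one target, and each target is covered exactly once. A main path that contains a target passes through $y$ and hence through exactly one cell. The $K=n^2$ paths together cover at most $n^2$ cells, one per path, and all $n^2$ cells must be covered. Hence every path, filler paths included, contains exactly one cell, and every cell lies on exactly one path. This gives all the claims of \cref{lem:tpc-tight}.

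The main obstacle is the label bookkeeping that shows filler paths can never reach $y$. It relies on $D>C+2M$ together with the bound $M\ge a_i+b_h$, both of which hold by the choice of constants. The rest is pigeonhole counting, and it works because the paths may overlap only at $x$ and $y$.
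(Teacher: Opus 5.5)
Your proof is correct and follows essentially the paper's argument: source counting gives exactly $K$ paths, the bound $D>C+2M$ stops filler paths before $y$, and tightness of the counts does the rest. The only difference is cosmetic. You run separate pigeonhole counts for targets ($n$ main paths, $n$ targets) and for cells ($n^2$ paths, $n^2$ cells). The paper instead merges these into a single occupancy count over $I=V(G)\setminus\{x,y\}$ and reads off equality in the union bound.
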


\begin{proof}
Let $I=V(G)\setminus\{x,y\}$. The graph has exactly $K=n^2$ indegree-zero sources, and no directed path can contain two sources. Therefore every cover has at least $K$ paths.

The set $I$ contains $n^2$ sources, $n^2$ cells, and $n$ targets, hence
\[
|I|=2n^2+n.
\]
A path beginning at a main source contains at most three vertices of $I$:
\[
p_j\to x\to\alpha_{i,h}\to y\to\beta_k.
\]
A filler enters $x$ after time $D$, whereas every label on an edge $\alpha_{i,h}\to y$ is at most $C+2M<D$. Thus a filler path cannot continue beyond its cell and contains at most two vertices of $I$.

Consequently, the $n$ main-source paths and the $n(n-1)$ filler-source paths contain at most
\[
3n+2n(n-1)=2n^2+n=|I|
\]
occurrences of vertices in $I$.

For each source path $P$, let $q(P)=|V(P)\cap I|$. The preceding path-type bounds give
\[
\sum_P q(P)\le |I|.
\]
On the other hand, the paths cover every vertex of $I$, and therefore subadditivity of cardinality gives
\[
|I|
=
\left|\bigcup_P (V(P)\cap I)\right|
\le
\sum_P |V(P)\cap I|
=
\sum_P q(P).
\]
Hence $\sum_P q(P)=|I|$. Both inequalities are therefore tight. Every source path must attain its individual maximum possible $I$-occupancy; otherwise the sum would be strictly smaller than $|I|$. Moreover, equality in the union bound implies that the sets $V(P)\cap I$ are pairwise disjoint. Thus no vertex of $I$ occurs on two distinct paths, and every vertex of $I$ occurs on exactly one path. The stated structure follows.
\end{proof}

\subsection{Feasibility and filler quotas}

\begin{lemma}\label{lem:tpc-main}
If the path beginning at $p_j$ uses the cell $\alpha_{i,h}$ and target $\beta_k$, then
\[
h\ge j
\qquad\text{and}\qquad
a_i+b_h\le t_k.
\tag{9}\label{eq:tpcineq}
\]
\end{lemma}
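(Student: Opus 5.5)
The plan is to read both inequalities directly off the strict increase of labels along the path
\[
p_j\to x\to\alpha_{i,h}\to y\to\beta_k,
\]
which is the only possible shape of a main-source path that visits a cell and a target (by \cref{lem:tpc-tight}, and because the underlying digraph is layered). I will follow the labels arc by arc and treat the two inequalities as two separate consequences of strict monotonicity.

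\textbf{First inequality, $h\ge j$.} The first arc $p_j\to x$ is used at time $2j-1$ by \eqref{eq:tpcmainin}. The arc $x\to\alpha_{i,h}$ must then be taken at a label in $\{2h,\ D+2i\}$ that exceeds $2j-1$.

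I will show that the label $D+2i$ cannot be used if the path is to continue to $y$. The next arc $\alpha_{i,h}\to y$ has the single label $C+2(a_i+b_h)\le C+2M$. Since $D=C+2M+10$, this label is smaller than $D$, so it cannot follow $D+2i$. Hence the path uses label $2h$ on $x\to\alpha_{i,h}$. Then $2h>2j-1$, and since both sides are integers this gives $h\ge j$.

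\textbf{Second inequality, $a_i+b_h\le t_k$.} Continuing along the path, the arc $\alpha_{i,h}\to y$ is used at $C+2(a_i+b_h)$ and the arc $y\to\beta_k$ at $C+2t_k+1$. Strict increase gives
\[
C+2(a_i+b_h) < C+2t_k+1 .
\]
This simplifies to $2(a_i+b_h)\le 2t_k$, that is, $a_i+b_h\le t_k$.

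One consistency check is also needed: the step $x\to\alpha_{i,h}$ before $\alpha_{i,h}\to y$ requires $2h< C+2(a_i+b_h)$. This holds automatically because $C=2n+10>2h$. It is needed only for completeness and does not affect the conclusion.

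\textbf{Main obstacle.} The only delicate point is excluding the late label $D+2i$ on the $x\to\alpha_{i,h}$ arc. This is resolved by the choice $D>C+2M$, the same fact already used in \cref{lem:tpc-tight} to cut filler paths at their cell. The rest is parity and integrality bookkeeping.
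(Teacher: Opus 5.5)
Your proof is correct and follows the paper's argument: you exclude the late label $D+2i$ because it exceeds every $\alpha$-to-$y$ label ($D>C+2M$), read $h\ge j$ off $2j-1<2h$, and read $a_i+b_h\le t_k$ off the strict order of the last two labels. The extra check $2h<C+2(a_i+b_h)$ is harmless but unnecessary, since the lemma assumes the path is already a valid temporal path.
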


\begin{proof}
By \cref{lem:tpc-tight}, the path must reach a target. It therefore uses the early label $2h$ on $x\to\alpha_{i,h}$, because the late label exceeds every $\alpha$-to-$y$ label. Strict temporality at $x$ gives $2j-1<2h$, equivalently $h\ge j$.

The final two labels are $C+2(a_i+b_h)$ and $C+2t_k+1$. Their strict order is equivalent to $a_i+b_h\le t_k$.
\end{proof}

Let $m_i$ be the number of cells in row $i$ selected by main paths. Then $m_i\ge0$ and $\sum_i m_i=n$.

\begin{lemma}[Prefix quotas]\label{lem:tpc-prefix}
For every $q\in[n]$,
\[
\sum_{i=1}^{q}m_i\ge q.
\tag{10}\label{eq:prefix}
\]
\end{lemma}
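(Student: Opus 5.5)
The plan is a counting argument on the first $q$ rows of the cell matrix, combining the exact-once coverage from \cref{lem:tpc-tight} with the timing constraint that fillers face at $x$. By \cref{lem:tpc-tight}, each of the $qn$ cells $\alpha_{i,h}$ with $i\le q$ lies on exactly one path. That path is either a main-source path or a filler-source path. So it suffices to bound from above how many of these cells can lie on filler paths.

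First I will determine which cells a filler path can reach. A filler $f_{r,\ell}$ enters $x$ at time $D+2r-1$, using \eqref{eq:tpcfillin}. Its next arc is $x\to\alpha_{i,h}$ with label set $\{2h,\ D+2i\}$ from \eqref{eq:tpccell}. The early label satisfies $2h\le 2n<D$, so it cannot be used after time $D+2r-1$. Hence the path must take the late label $D+2i$. Strict temporality then requires $D+2r-1<D+2i$, which is equivalent to $i\ge r$. Thus a filler with index $r$ can only cover a cell in a row $i\ge r$. By \cref{lem:tpc-tight}, each filler path covers exactly one cell.

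Next I count. A cell in a row $i\le q$ covered by a filler forces that filler to have $r\le i\le q$. There are exactly $q(n-1)$ fillers $f_{r,\ell}$ with $r\le q$, one for each $r\in[q]$ and $\ell\in[n-1]$. Each covers only one cell, so at most $q(n-1)$ of the $qn$ cells in rows $1,\ldots,q$ are covered by fillers. The remaining cells in these rows, at least $qn-q(n-1)=q$ of them, must lie on main paths. Since each cell occurs on exactly one path, these are distinct cells selected by main paths. This gives $\sum_{i=1}^{q}m_i\ge q$.

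The only delicate step is the exclusion of the early label for fillers. It rests on $2n<D$, which holds because $D=C+2M+10$ and $C=2n+10$. Everything else is a direct pigeonhole count resting on the exactly-once coverage guaranteed by \cref{lem:tpc-tight}.
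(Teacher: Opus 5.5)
Your proposal is correct and follows essentially the same counting argument as the paper: the cells in rows $1,\dots,q$ not covered by main paths must be covered by fillers of type $r\le q$. There are $q(n-1)$ such fillers, each covering exactly one cell, which gives $qn-\sum_{i\le q}m_i\le q(n-1)$. Your explicit check that a filler cannot use the early label because $2h\le 2n<D$ is a detail the paper leaves implicit, and it is correct.
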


\begin{proof}
A filler source of type $r$ enters $x$ at time $D+2r-1$ and can use the late label $D+2i$ of a cell in row $i$ exactly when $r\le i$.

The first $q$ rows contain $qn$ cells. Main paths cover $M_q=\sum_{i=1}^q m_i$ of them. The remaining $qn-M_q$ cells must be covered by filler sources of types at most $q$, because a filler of larger type cannot reach any of these rows. There are exactly $q(n-1)$ such filler sources. Hence
\[
qn-M_q\le q(n-1),
\]
which is equivalent to \eqref{eq:prefix}.
\end{proof}

\begin{lemma}[Majorization]\label{lem:tpc-major}
Under the strict order $a_1>\cdots>a_n$, the prefix constraints imply
\[
\sum_{i=1}^{n}m_i a_i\ge\sum_{i=1}^{n}a_i.
\tag{11}\label{eq:major}
\]
Equality holds if and only if $m_i=1$ for every $i$.
\end{lemma}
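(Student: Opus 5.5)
The plan is to use Abel summation (summation by parts). Write $M_q=\sum_{i=1}^q m_i$ for $q\in[n]$ and $M_0=0$. By \cref{lem:tpc-prefix} we have $M_q\ge q$ for all $q\in[n]$, and $M_n=n$ because $\sum_i m_i=n$. Set $a_{n+1}=0$. Then we may write
\[
\sum_{i=1}^{n} m_i a_i=\sum_{q=1}^{n} M_q\,(a_q-a_{q+1}),
\qquad
\sum_{i=1}^{n} a_i=\sum_{q=1}^{n} q\,(a_q-a_{q+1}),
\]
where the second identity is the same summation by parts applied to the all-ones vector, whose prefix sums are $q$.

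Subtracting the two identities gives
\[
\sum_{i=1}^n m_i a_i-\sum_{i=1}^n a_i=\sum_{q=1}^{n-1}(M_q-q)(a_q-a_{q+1})+(M_n-n)a_n .
\]
The last term vanishes since $M_n=n$. For $q\le n-1$, the strict order \eqref{eq:orders} gives $a_q-a_{q+1}>0$, and \cref{lem:tpc-prefix} gives $M_q-q\ge0$. So every term is nonnegative, which proves \eqref{eq:major}.

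For the equality case, equality forces each summand to vanish. Since every coefficient $a_q-a_{q+1}$ with $q\le n-1$ is strictly positive, we get $M_q=q$ for all $q\in[n-1]$, and we already know $M_n=n$. Taking differences yields $m_q=M_q-M_{q-1}=1$ for every $q$. Conversely, if $m_i=1$ for all $i$, both sides of \eqref{eq:major} are trivially equal.

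The only delicate point is the use of strictness. Strict decrease of the $a_i$ gives positive weights for $q<n$. The endpoint $q=n$ has weight $a_n>0$, but it contributes nothing because $M_n=n$ exactly. This should be stated explicitly so that the equality characterization is airtight.
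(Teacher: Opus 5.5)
Your proposal is correct and follows essentially the paper's route: the paper applies summation by parts to $d_i=m_i-1$ with prefix sums $D_q=M_q-q$, which is your computation with the vanishing boundary term $(M_n-n)a_n$ already dropped. Your equality argument, using the strictly positive weights $a_q-a_{q+1}$ for $q<n$, matches the paper's as well.
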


\begin{proof}
Set $d_i=m_i-1$ and $D_q=\sum_{i=1}^q d_i$. By \cref{lem:tpc-prefix}, $D_q\ge0$ for $q<n$, while $D_n=0$. Summation by parts yields
\[
\sum_{i=1}^{n}(m_i-1)a_i
=
\sum_{q=1}^{n-1}D_q(a_q-a_{q+1})\ge0.
\]
Because every difference $a_q-a_{q+1}$ is strictly positive, equality holds exactly when every $D_q=0$, which is equivalent to $m_i=1$ for all $i$.
\end{proof}

\subsection{Correctness}

\begin{proof}[Proof of \cref{thm:tpc}]
Membership in NP is immediate from a certificate listing the paths and the selected edge labels.

For completeness, suppose the \DNMTS{} instance has permutations $\pi,\sigma$ satisfying $a_{\pi(j)}+b_j=t_{\sigma(j)}$. For each $j$, take the path
\[
p_j\xrightarrow{2j-1}x
\xrightarrow{2j}\alpha_{\pi(j),j}
\xrightarrow{C+2(a_{\pi(j)}+b_j)}y
\xrightarrow{C+2t_{\sigma(j)}+1}\beta_{\sigma(j)}.
\]
Exactly one cell is selected in each row. For each row $i$, match its $n-1$ filler sources bijectively to the $n-1$ unselected cells of that row, using
\[
f_{i,\ell}\xrightarrow{D+2i-1}x
\xrightarrow{D+2i}\alpha_{i,h}.
\]
These $n^2$ paths cover every vertex.

For soundness, suppose a cover of size at most $K$ exists. By \cref{lem:tpc-tight}, each main path has the form
\[
p_j\to x\to\alpha_{i_j,h_j}\to y\to\beta_{k_j},
\]
the selected cells are distinct, and $(k_1,\ldots,k_n)$ is a permutation of $[n]$. By \cref{lem:tpc-main},
\[
h_j\ge j,
\qquad
 a_{i_j}+b_{h_j}\le t_{k_j}.
\]
Summing the second inequalities gives
\[
\sum_{j=1}^{n}a_{i_j}+\sum_{j=1}^{n}b_{h_j}
\le\sum_{k=1}^{n}t_k.
\tag{12}\label{eq:tpcsum}
\]
The first term is $\sum_i m_i a_i$, which is at least $\sum_i a_i$ by \cref{lem:tpc-major}. Since $B$ is strictly increasing and $h_j\ge j$ for every $j$, we have $b_{h_j}\ge b_j$ term by term. Therefore
\[
\sum_j b_{h_j}\ge\sum_j b_j.
\]
No permutation property of the index sequence $(h_j)_{j=1}^n$ is used in this inequality; it follows solely from monotonicity of $B$ and the coordinatewise inequalities $h_j\ge j$. The permutation structure is derived only after equality is forced below.
Together with the balance identity \eqref{eq:balance}, these inequalities form
\[
\sum_i a_i+\sum_j b_j
\le
\sum_j a_{i_j}+\sum_j b_{h_j}
\le
\sum_k t_k
=
\sum_i a_i+\sum_j b_j.
\]
Hence equality holds throughout. Equality in \cref{lem:tpc-major} implies $m_i=1$ for every $i$, so the row indices $i_j$ form a permutation. Strict increase of $B$ and $h_j\ge j$ imply $h_j=j$ for every $j$. Finally, all inequalities $a_{i_j}+b_j\le t_{k_j}$ have nonnegative slack and zero total slack, so
\[
a_{i_j}+b_j=t_{k_j}
\qquad(j\in[n]).
\]
The row indices and target indices are permutations, and thus these equalities form a valid \DNMTS{} matching.
\end{proof}

\section{NP-completeness of TDPC}\label{sec:tdpc}

We prove \cref{thm:tdpc}. Sort $B$ increasingly:
\[
b_1<b_2<\cdots<b_n.
\tag{13}\label{eq:Border}
\]

\subsection{Construction}

Let
\[
M=\max\left\{\max_{i,j}(a_i+b_j),\max_k t_k\right\},
\qquad H=M+10,
\qquad L_j=(j-1)H.
\]
Create two distinguished vertices $x,y$. For each $j\in[n]$, create a source $p_j$. For each $i\in[n]$, create a middle vertex $\alpha_i$. For each $k\in[n]$, create a sink $\beta_k$. For every $j\in[n-1]$, create a guard source $u_j$ and guard sink $v_j$, and set
\[
S_j=L_j+M+6.
\]

The label sets are
\begin{align}
\lambda(p_j,x)&=\{L_j+1\}, &&j\in[n], \tag{14}\label{eq:tdpcpin}\\
\lambda(x,\alpha_i)&=\{L_h+2:h\in[n]\}, &&i\in[n], \tag{15}\label{eq:tdpcxalpha}\\
\lambda(\alpha_i,y)&=\{L_h+3+a_i+b_h:h\in[n]\}, &&i\in[n], \tag{16}\label{eq:tdpcalphay}\\
\lambda(y,\beta_k)&=\{L_h+4+t_k:h\in[n]\}, &&k\in[n], \tag{17}\label{eq:tdpcybeta}\\
\lambda(u_j,y)&=\{S_j-1\},\quad
\lambda(y,v_j)=\{S_j\}, &&j\in[n-1]. \tag{18}\label{eq:guards}
\end{align}
Set
\[
K=2n-1.
\tag{19}\label{eq:tdpcK}
\]

\begin{figure}[ht]
\centering
\begin{tikzpicture}[>=Latex,node distance=12mm and 18mm,
  v/.style={circle,draw,minimum size=7mm,inner sep=1pt},
  box/.style={draw,rounded corners,minimum width=27mm,minimum height=16mm,align=center},
  lab/.style={font=\scriptsize,align=center}]
\node[box] (p) {main sources\\$p_j$};
\node[v,right=of p] (x) {$x$};
\node[box,right=of x] (a) {middles\\$\alpha_i$};
\node[v,right=of a] (y) {$y$};
\node[box,right=of y] (b) {main sinks\\$\beta_k$};
\node[box,below=18mm of a] (u) {guard sources\\$u_j$};
\node[box,below=18mm of b] (v) {guard sinks\\$v_j$};
\draw[->] (p) -- (x);
\draw[->] (x) -- (a);
\draw[->] (a) -- (y);
\draw[->] (y) -- (b);
\draw[->] (u) -- node[above,lab] {$S_j-1$} (y);
\draw[->] (y) -- node[right,lab] {$S_j$} (v);
\end{tikzpicture}
\caption{Layered structure of the \TDPC{} construction. Guard paths occupy $y$ at separator times.}
\label{fig:tdpc}
\end{figure}

The underlying digraph is acyclic under the order
\[
\{p_j,u_j\}\prec x\prec\{\alpha_i\}\prec y\prec\{\beta_k,v_j\},
\]
with the $u_j$ placed anywhere before $y$. Every edge is incident with $x$ or $y$, so $\vc(U_G)\le2$. The undirected edges $p_1x$ and $\alpha_1y$ are disjoint, so $\vc(U_G)=2$.

\subsection{Tight structure and block forcing}

\begin{lemma}\label{lem:tdpc-tight}
Every temporally disjoint path cover of size at most $K$ has exactly $K$ paths. Every $p_j$-path contains one middle vertex and one sink, every $u_j$-path contains one sink, and every vertex outside $\{x,y\}$ occurs on exactly one path.
\end{lemma}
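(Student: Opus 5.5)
We mirror the counting argument of \cref{lem:tpc-tight}. The sources of the construction are exactly the $n$ vertices $p_j$ and the $n-1$ guard sources $u_j$. We check indegrees: $p_j$ and $u_j$ have indegree zero, while every $\alpha_i$ has an in-arc from $x$, $y$ has in-arcs, and every sink has an in-arc from $y$. So there are exactly $2n-1=K$ indegree-zero vertices. A directed path contains at most one of them, since each can only be a first vertex. Hence every cover, disjoint or not, has at least $K$ paths, and a cover of size at most $K$ has exactly $K$ paths, one starting at each source. Strictly, a path could begin at a non-source vertex. But then there would be at most $K-1$ paths left for the $K$ sources, each of which must lie on some path starting at it, which is impossible. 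So every path begins at a distinct source.

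Next let $I=V(G)\setminus\{x,y\}$. It consists of $n$ main sources, $n-1$ guard sources, $n$ middles, $n$ main sinks and $n-1$ guard sinks, so
\[
|I|=5n-2 .
\]
A path from $p_j$ has the form $p_j\to x\to\alpha_i\to y\to s$ with $s$ a sink. The only out-neighbour of $p_j$ is $x$, the out-neighbours of $x$ are middles, and the only out-neighbour of a middle is $y$. Thus such a path meets $I$ in at most three vertices. A path from $u_j$ has the form $u_j\to y\to s$, since the only arc out of $u_j$ goes to $y$ and then the path must end at a sink. It therefore meets $I$ in at most two vertices. Summing these maxima gives
\[
3n+2(n-1)=5n-2=|I| .
\]
We then repeat the tightness argument verbatim. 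The path-type bounds give $\sum_P|V(P)\cap I|\le |I|$. Covering gives $|I|\le\sum_P|V(P)\cap I|$ by subadditivity. So every path attains its maximum $I$-occupancy, and the sets $V(P)\cap I$ are pairwise disjoint. This is exactly the claim: each $p_j$-path contains one middle and one sink, each $u_j$-path contains one sink, and every vertex of $I$ lies on exactly one path. Temporal disjointness is not needed here; it enters only in the later block-forcing lemmas.

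The main point to get right is the claim that a path from $u_j$ cannot pick up more than two vertices of $I$. We must also check that no path picks up extra $I$-vertices through some route we have not listed. Both follow purely from the arc structure: $y$ has only sink out-neighbours and $x$ has only middle out-neighbours. We will also state explicitly that the maximum three for a $p_j$-path is reached only by a path that actually includes a middle and a sink. This is the content of the tightness conclusion.

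Temporal feasibility of individual labels is irrelevant for these upper bounds, since they are purely structural. So no computation with $L_h$, $S_j$ or $H$ is required for this lemma.
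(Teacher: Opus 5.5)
Your proposal is correct and follows the paper's own argument. You count the $2n-1$ indegree-zero sources, bound the $I$-occupancy of $p_j$-paths by three and of $u_j$-paths by two, and conclude from $3n+2(n-1)=5n-2=|I|$ that every inequality is tight; your added handling of paths not starting at a source and of the union-bound equality just makes explicit what the paper's terse proof leaves implicit.
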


\begin{proof}
There are $K=2n-1$ indegree-zero sources. Let $I=V(G)\setminus\{x,y\}$. Then
\[
|I|=n+n+n+(n-1)+(n-1)=5n-2.
\]
A path beginning at $p_j$ contains at most three vertices of $I$, and a path beginning at $u_j$ contains at most two. Thus $K$ paths contain at most
\[
3n+2(n-1)=5n-2=|I|
\]
occurrences of vertices from $I$. Covering all vertices forces equality throughout, proving the claim.
\end{proof}

\begin{lemma}[First-block forcing]\label{lem:firstblock}
The path beginning at $p_j$ leaves $x$ at time $L_j+2$.
\end{lemma}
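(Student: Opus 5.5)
The plan is to argue directly from temporal disjointness at the cover vertex $x$, combined with \cref{lem:tdpc-tight}. By that lemma, the path beginning at $p_j$ contains a middle vertex, so it continues $p_j\to x\to\alpha_i$ for some $i$. The only arc into $x$ from $p_j$ carries the label $L_j+1$, so the path enters $x$ at time $L_j+1$. It leaves along $x\to\alpha_i$ at some label $L_h+2$ with $h\in[n]$.

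First I will fix which values of $h$ are possible. Strict temporality requires $L_h+2>L_j+1$. Since $L_h=(h-1)H$ with $H\ge 10$, this holds exactly when $h\ge j$. For $j=n$ this already forces $h=n$, so the claim holds in that case.

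For $j<n$, I will argue by contradiction: suppose $h\ge j+1$. By the occupancy convention of the preliminaries, the $p_j$-path occupies $x$ at every integer time in $[L_j+2,\,L_h+2]$. Since $L_j+2<L_{j+1}+2\le L_h+2$, this interval contains the time $L_{j+1}+2$.

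Now consider the path beginning at $p_{j+1}$. It is distinct from the $p_j$-path, because no directed path contains two sources. By \cref{lem:tdpc-tight} it also passes through $x$ as an internal vertex. It enters $x$ at time $L_{j+1}+1$ and leaves at some time $s\ge L_{j+1}+2$. Hence it occupies $x$ throughout $[L_{j+1}+2,\,s]$, and in particular at time $L_{j+1}+2$.

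So two distinct paths occupy $x$ at the same time, which contradicts temporal disjointness. Therefore $h=j$, and the $p_j$-path leaves $x$ at time $L_j+2$.

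I expect no step to be a real obstacle. The only care needed is to apply the occupancy convention correctly: an internal vertex entered at time $r$ and left at time $s$ is occupied on $[r+1,s]$. Under this convention the $p_{j+1}$-path is guaranteed to occupy $x$ at $L_{j+1}+2$, regardless of which block it later uses. If the convention turned out to be stated differently, for example as $[r,s]$, the argument still goes through, because the relevant intervals only become larger. The guards $u_j,v_j$ are not needed for this lemma; presumably they serve later to force the blocks at $y$.
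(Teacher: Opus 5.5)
Your proposal is correct and follows essentially the same argument as the paper. Both use strict temporality to get $h\ge j$ and then derive a conflict at $x$ with another main path supplied by \cref{lem:tdpc-tight}. The paper picks the $p_h$-path and the time $L_h+2$, while you pick the $p_{j+1}$-path and the time $L_{j+1}+2$; either choice works.
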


\begin{proof}
Suppose it leaves $x$ at time $L_h+2$. Strict temporality gives $h\ge j$. If $h>j$, then the path occupies $x$ at time $L_h+2$. By \cref{lem:tdpc-tight}, the distinct path beginning at $p_h$ must also leave $x$ at some time $L_r+2$ with $r\ge h$, and therefore occupies $x$ at time $L_h+2$. This contradicts temporal disjointness. Hence $h=j$.
\end{proof}

\begin{lemma}[Forced guard paths]\label{lem:guards}
For every $j\in[n-1]$, the path beginning at $u_j$ is
\[
u_j\xrightarrow{S_j-1}y\xrightarrow{S_j}v_j.
\]
\end{lemma}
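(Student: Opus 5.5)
The plan is to identify which path covers the guard sink $v_j$ and show, by temporal disjointness at $y$, that it can only be the path from $u_j$.

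First I record the local structure around $u_j$ and $v_j$. By \cref{lem:tdpc-tight}, the path $P$ beginning at $u_j$ contains exactly one sink, and every sink lies on exactly one path of the cover. The only arc leaving $u_j$ is $u_j\to y$, with the single label $S_j-1$ by \eqref{eq:guards}. So $P$ enters $y$ at time $S_j-1$ and then leaves $y$ at some time $s\ge S_j$ towards a sink. Since $y$ is an internal vertex of $P$, the occupancy convention says $P$ occupies $y$ at every integer time in $[S_j,s]$, and in particular at time $S_j$.

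Next I look at the path $Q$ that covers $v_j$; it exists because the family is a cover. The vertex $v_j$ has indegree one, with in-arc $y\to v_j$ carrying only the label $S_j$. Hence $Q$ reaches $v_j$ by leaving $y$ at time $S_j$. The path $Q$ must arrive at $y$ strictly earlier, since every source has an arc only to $x$ or $y$ and $v_j$ itself is not a source. Therefore $y$ is internal to $Q$, and $Q$ also occupies $y$ at time $S_j$.

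Finally, temporal disjointness forbids two distinct paths from occupying $y$ at the same time $S_j$, so $Q=P$. Thus $v_j$ lies on the path from $u_j$. Because that path contains exactly one sink (\cref{lem:tdpc-tight}), it ends at $v_j$, which gives $u_j\xrightarrow{S_j-1}y\xrightarrow{S_j}v_j$.

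The only delicate point is the occupancy convention at $y$: the argument needs $P$ to occupy $y$ at time $S_j$ (arrival time plus one) and $Q$ to occupy $y$ at its departure time $S_j$. Both follow from the interval $[r+1,s]$ in the preliminaries, so no case analysis over the main-path labels is needed.
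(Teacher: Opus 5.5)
Your proof is correct and matches the paper's argument. By the $[r+1,s]$ convention the $u_j$-path occupies $y$ at time $S_j$, and the path covering $v_j$ occupies $y$ at time $S_j$ through the single-label arc $y\to v_j$, so temporal disjointness identifies the two paths. Your additional remarks just spell out details the paper leaves implicit: that $y$ is internal to the $v_j$-path (every path starts at a source by the tight-structure lemma), and that the single-sink property makes the path end at $v_j$.
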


\begin{proof}
By \cref{lem:tdpc-tight}, the path beginning at $u_j$ must contain a sink. It enters $y$ at time $S_j-1$. Under the TDPC occupancy convention recalled in the preliminaries, if a path enters an internal vertex at time $r$ and leaves it at time $s$, then it occupies that vertex throughout the integer interval $[r+1,s]$. Hence, regardless of which admissible outgoing edge from $y$ is eventually used, every feasible departure time $s$ satisfies $s\ge S_j$, and the $u_j$-path occupies $y$ throughout $[S_j,s]$, in particular at time $S_j$.

The unique path containing $v_j$ must use the edge $y\to v_j$ at time $S_j$, and therefore also occupies $y$ at time $S_j$. If it were distinct from the path beginning at $u_j$, temporal disjointness would be violated. Thus the two are the same path, which forces
\[
u_j\xrightarrow{S_j-1}y\xrightarrow{S_j}v_j.
\]
\end{proof}

\begin{lemma}[No backward block use at $y$]\label{lem:nobackward}
If a main path enters $y$ using a label from block $q$, then it cannot leave $y$ using a label from any block $h<q$.
\end{lemma}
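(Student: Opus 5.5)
The statement is purely a consequence of strict temporality and of where the blocks sit on the time axis, so the plan is to compare label magnitudes. A main path enters $y$ along some edge $\alpha_i\to y$ with a block-$q$ label $L_q+3+a_i+b_q$. It leaves along some edge $y\to\beta_k$ with a block-$h$ label $L_h+4+t_k$. Strict temporality requires
\[
L_q+3+a_i+b_q < L_h+4+t_k .
\]
I will show this fails whenever $h<q$.

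Suppose $h\le q-1$. Then $L_q-L_h=(q-h)H\ge H=M+10$, and the departure label is bounded by
\[
L_h+4+t_k\le L_h+4+M = L_q - H + 4 + M = L_q-6 .
\]
The arrival label, on the other hand, is at least
\[
L_q+3+a_i+b_q > L_q+3 ,
\]
since all input integers are positive. So the departure time is strictly smaller than the arrival time, and the arc $y\to\beta_k$ cannot be traversed after the arc $\alpha_i\to y$.

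The same conclusion holds if the path instead leaves $y$ towards a guard sink $v_j$; this case should be stated for completeness. The label $S_j=L_j+M+6$ lies in block $j$, with $L_j< S_j<L_{j+1}$ because $M+6<H$. By \cref{lem:tdpc-tight}, however, a main path leaves $y$ towards a main sink, and by \cref{lem:guards} every $v_j$ is already used by the guard path. So only the edges $y\to\beta_k$ are relevant, and block membership of a label is unambiguous since all labels of block $h$ lie in $[L_h,L_h+H)$.

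I see no real obstacle here. The only care needed is to check that the block width $H=M+10$ exceeds the maximal intra-block offset $4+M$ of any label, so that labels from distinct blocks never interleave. Once that is verified, the two inequalities above combine directly to give the claim.
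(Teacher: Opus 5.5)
Your proof is correct and matches the paper's argument: you bound any earlier-block exit label by $L_h+4+t_k\le L_{q-1}+4+M=L_q-6$, compare it with the entry label $L_q+3+a_i+b_q>L_q+3$, and conclude that strict temporality is violated. The guard-sink remark is not needed, and it could be settled by direct computation instead of by appeal to the other lemmas, since $S_h\le L_{q-1}+M+6=L_q-4$ whenever $h<q$.
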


\begin{proof}
Suppose a main path enters $y$ from some $\alpha_i$ using the block-$q$ label
\[
L_q+3+a_i+b_q.
\]
Any main-sink exit label belonging to an earlier block $h<q$ is at most
\[
L_h+4+t_k
\le
L_{q-1}+4+M
=
L_q-H+4+M
=
L_q-6,
\]
where $H=M+10$. Since $a_i$ and $b_q$ are positive,
\[
L_q-6 < L_q+3+a_i+b_q.
\]
Thus every exit label from an earlier block is strictly smaller than the time at which the path enters $y$, contradicting strict temporal validity. Hence an earlier-block exit is impossible.
\end{proof}

\begin{lemma}[No crossing at $y$]\label{lem:nocross}
A main path cannot enter $y$ using a label from block $q$ and leave $y$ using a label from a later block.
\end{lemma}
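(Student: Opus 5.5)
The plan is to use the guard path between blocks $q$ and $q+1$ as a temporal blocker at $y$. Suppose a main path enters $y$ from some $\alpha_i$ with the block-$q$ label $r=L_q+3+a_i+b_q$ and leaves $y$ towards some $\beta_k$ with a block-$h$ label $s=L_h+4+t_k$, where $h>q$. In particular $q\le n-1$, so the guard index $q$ exists and \cref{lem:guards} applies: the path beginning at $u_q$ is $u_q\xrightarrow{S_q-1}y\xrightarrow{S_q}v_q$. Under the occupancy convention from the preliminaries, this guard path occupies $y$ at time $S_q=L_q+M+6$.

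Next, I will show that the main path also occupies $y$ at time $S_q$. By the same convention it occupies $y$ at every integer time in $[r+1,s]$, so it suffices to check $r+1\le S_q\le s$. Since $a_i+b_q\le M$, we get $r+1\le L_q+4+M<L_q+M+6=S_q$. Since $h\ge q+1$ and $t_k>0$, we get $s\ge L_{q+1}+4+t_k>L_q+H+4=L_q+M+14>S_q$. Hence $S_q\in[r+1,s]$.

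Finally, the main path and the $u_q$-path are distinct: by \cref{lem:tdpc-tight} every vertex outside $\{x,y\}$ lies on exactly one path, and the main path starts at some $p_j$, not at $u_q$. Two distinct paths would then occupy $y$ at time $S_q$, contradicting temporal disjointness. I expect no step to be a real obstacle; the only care needed is applying the occupancy interval $[r+1,s]$ correctly and confirming, with the margin $H=M+10$, that $S_q$ lies strictly between the largest block-$q$ entry time and the smallest later-block exit time.
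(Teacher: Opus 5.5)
Your proof is correct and takes essentially the same approach as the paper: the forced guard path $u_q\to y\to v_q$ from \cref{lem:guards} occupies $y$ at time $S_q$, and $S_q$ lies inside the main path's occupancy interval $[r+1,s]$, which contradicts temporal disjointness. You also write out the arithmetic explicitly and observe that $h>q$ forces $q\le n-1$, so the guard $u_q$ exists; the paper leaves both points implicit.
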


\begin{proof}
Every entry label in block $q$ is less than $S_q=L_q+M+6$, while every exit label in a later block is greater than $S_q$. Such a path would occupy $y$ at $S_q$, where the forced guard path from \cref{lem:guards} also occupies $y$.
\end{proof}

\subsection{Global elimination of drift}

By \cref{lem:firstblock}, the path beginning at $p_j$ leaves $x$ in block $j$. Suppose it waits at its selected middle vertex and enters $y$ in block $q_j$. The block spacing gives
\[
q_j\ge j.
\tag{20}\label{eq:qgej}
\]
By \cref{lem:nobackward,lem:nocross}, it cannot leave $y$ through either an earlier or a later block. Hence it leaves $y$ for some $\beta_{\sigma(j)}$ in the same block $q_j$. Since the middle vertices and main sinks are used exactly once after the guard paths are fixed, the selected indices define permutations $\pi$ and $\sigma$. Strict temporality at $y$ is equivalent to
\[
a_{\pi(j)}+b_{q_j}\le t_{\sigma(j)}.
\tag{21}\label{eq:tdpcineq}
\]

\begin{lemma}[Drift elimination]\label{lem:drift}
For every $j\in[n]$,
\[
q_j=j
\qquad\text{and}\qquad
a_{\pi(j)}+b_j=t_{\sigma(j)}.
\]
\end{lemma}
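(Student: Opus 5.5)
The plan is to mirror the global counting argument from the proof of \cref{thm:tpc}: sum the per-path inequalities \eqref{eq:tdpcineq}, compare with the balance identity \eqref{eq:balance}, and conclude that every slack must vanish. Two inputs are needed: that $\pi$ and $\sigma$ are permutations, and that $q_j\ge j$ from \eqref{eq:qgej}.

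\emph{Step 1: permutation structure.} By \cref{lem:tdpc-tight}, each middle vertex $\alpha_i$ and each sink $\beta_k$ lies on exactly one path. By \cref{lem:guards}, the guard paths contain no middle vertex and no main sink. Hence the $n$ main paths use $n$ distinct middle vertices and $n$ distinct sinks, so $\pi$ and $\sigma$ are permutations of $[n]$. It follows that
\[
\sum_j a_{\pi(j)}=\sum_i a_i
\qquad\text{and}\qquad
\sum_j t_{\sigma(j)}=\sum_k t_k .
\]

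\emph{Step 2: bounding the $B$-part.} Since $B$ is strictly increasing by \eqref{eq:Border} and $q_j\ge j$, we have $b_{q_j}\ge b_j$ term by term, with equality exactly when $q_j=j$. As in the TPC proof, no permutation property of $(q_j)$ is needed here.

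\emph{Step 3: summation and equality.} Summing \eqref{eq:tdpcineq} over $j$ and using the previous two steps gives
\[
\sum_i a_i+\sum_j b_j\le\sum_j a_{\pi(j)}+\sum_j b_{q_j}\le\sum_j t_{\sigma(j)}=\sum_k t_k=\sum_i a_i+\sum_j b_j ,
\]
where the last equality is \eqref{eq:balance}. So equality holds throughout. Equality in the first inequality, together with the strictness of $B$, forces $b_{q_j}=b_j$ for every $j$, hence $q_j=j$. The remaining inequalities $a_{\pi(j)}+b_j\le t_{\sigma(j)}$ have nonnegative slacks summing to zero, so each is an equality.

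\emph{Main obstacle.} The delicate point is making sure \eqref{eq:tdpcineq} is really the correct per-path inequality. The path enters $y$ at $L_{q_j}+3+a_{\pi(j)}+b_{q_j}$ and leaves at $L_{q_j}+4+t_{\sigma(j)}$, both in the same block $q_j$. Strict increase of these labels is equivalent to $a_{\pi(j)}+b_{q_j}\le t_{\sigma(j)}$, because all quantities are integers. That both labels lie in block $q_j$ is exactly what \cref{lem:nobackward,lem:nocross} provide. Beyond checking this, I expect no further difficulty: the drift $q_j>j$ is excluded purely by this sum-tightness argument, without any pairwise temporal conflict analysis between main paths.
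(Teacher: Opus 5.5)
Your proposal is correct and matches the paper's proof. Both sum \eqref{eq:tdpcineq}, use that $\pi,\sigma$ are permutations together with the balance identity, and combine this with $b_{q_j}\ge b_j$ (from $q_j\ge j$ and strict monotonicity of $B$) to force $q_j=j$ and then zero slack term by term. Your extra checks — that guard paths avoid middle vertices and main sinks, and that strict temporality at $y$ is equivalent to the integer inequality — only make explicit what the paper states just before the lemma.
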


\begin{proof}
Summing \eqref{eq:tdpcineq} and using the fact that $\pi$ and $\sigma$ are permutations together with \eqref{eq:balance} gives
\[
\sum_{j=1}^{n}b_{q_j}\le\sum_{j=1}^{n}b_j.
\]
But \eqref{eq:qgej} and the strict increase of $B$ give $b_{q_j}\ge b_j$ term by term. Hence $q_j=j$ for every $j$. Substituting this into \eqref{eq:tdpcineq}, the total slack over all $j$ is zero. Every slack is nonnegative, so equality holds term by term.
\end{proof}

\subsection{Correctness}

\begin{proof}[Proof of \cref{thm:tdpc}]
Membership in NP follows by checking the paths, selected edge labels, coverage, and pairwise temporal disjointness.

For completeness, assume the \DNMTS{} instance has permutations $\pi,\sigma$ with $a_{\pi(j)}+b_j=t_{\sigma(j)}$. For every $j\in[n]$, use the main path
\[
p_j\xrightarrow{L_j+1}x
\xrightarrow{L_j+2}\alpha_{\pi(j)}
\xrightarrow{L_j+3+a_{\pi(j)}+b_j}y
\xrightarrow{L_j+4+t_{\sigma(j)}}\beta_{\sigma(j)}.
\]
At $x$, this path occupies only time $L_j+2$. At $y$, equality of the matched sum makes the occupancy the singleton time $L_j+4+t_{\sigma(j)}$. Different blocks are disjoint. Add the $n-1$ guard paths
\[
u_j\xrightarrow{S_j-1}y\xrightarrow{S_j}v_j.
\]
The guard time lies strictly after every main-path occupancy in block $j$ and strictly before every main-path occupancy in block $j+1$. Thus all $2n-1$ paths are pairwise temporally disjoint and cover the graph.

For soundness, assume a temporally disjoint cover of size at most $K$ exists. By \cref{lem:tdpc-tight,lem:firstblock,lem:guards,lem:nocross}, it consists of the forced guard paths and $n$ main paths. The selected middle and main sink indices form permutations $\pi$ and $\sigma$. Applying \cref{lem:drift} yields
\[
a_{\pi(j)}+b_j=t_{\sigma(j)}
\qquad(j\in[n]),
\]
which is a valid \DNMTS{} solution.
\end{proof}

\section{The boundary at vertex-cover number one}\label{sec:boundary}

The preceding results show that two vertices suffice for NP-completeness of both problems. For \TPC{}, \cref{cor:tpc-threshold} completely settles the threshold. For \TDPC{}, vertex-cover number one corresponds, up to isolated vertices, to a temporal oriented star and requires separate analysis.

Let $c$ be the center of a temporal oriented star. Write $P=\{p_1,\ldots,p_r\}$ for leaves whose edge is oriented $p_i\to c$, and $Q=\{q_1,\ldots,q_s\}$ for leaves whose edge is oriented $c\to q_j$. Let $A_i=\lambda(p_i,c)$ and $B_j=\lambda(c,q_j)$. Every two-edge temporal path has the form
\[
p_i\xrightarrow{a}c\xrightarrow{b}q_j,
\qquad a\in A_i,\ b\in B_j,\ a<b,
\]
and occupies $c$ throughout the integer interval $[a+1,b]$. We call $(i,j,a,b)$ a \emph{merge realization} and $[a+1,b]$ its \emph{occupancy interval}.

\subsection{Normalization and canonical merge families}

In a general multi-label star, temporal disjointness does not by itself forbid the same leaf from appearing in two different merge paths at different labels. The following exchange lemma shows that leaf reuse is nevertheless unnecessary in an optimum.

\begin{lemma}[De-reuse normalization]\label{lem:dereuse-star}
Let $\mathcal P$ be a feasible \TDPC{} cover of a temporal oriented star containing at least one two-edge merge path. There is a feasible cover $\mathcal P'$ with $|\mathcal P'|=|\mathcal P|$ in which every leaf belongs to at most one two-edge merge path.
\end{lemma}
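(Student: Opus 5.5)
We argue by an exchange procedure driven by a potential function. In a temporal oriented star every temporal path has one of three forms: a single vertex, a one-edge path $p_i\to c$ or $c\to q_j$, or a two-edge merge $p_i\to c\to q_j$. Let $\rho(\mathcal P)$ be the total number of \emph{excess} merge occurrences, i.e.\ $\sum_{\text{leaves }\ell}\max\{0,\mu_\ell-1\}$, where $\mu_\ell$ is the number of merge paths containing $\ell$. We show that whenever $\rho(\mathcal P)>0$ we can modify $\mathcal P$ into a feasible cover of the same cardinality with strictly smaller $\rho$. Iterating then yields $\mathcal P'$.

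Suppose a leaf, say $p_i\in P$, lies on two merge paths $R_1=p_i\xrightarrow{a}c\xrightarrow{b}q_j$ and $R_2=p_i\xrightarrow{a'}c\xrightarrow{b'}q_{j'}$. The case of a repeated out-leaf $q_j$ is symmetric, by reversing orientations and time. The key modification is to replace $R_2$ by the single-edge path $c\xrightarrow{b'}q_{j'}$. This path still covers $q_{j'}$, and $c$ remains covered, while $p_i$ stays covered by $R_1$. The cardinality is unchanged, and $\mu_{p_i}$ drops by one without increasing any other $\mu_\ell$, so $\rho$ strictly decreases.

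It remains to verify temporal disjointness, and this is the main obstacle because it depends on the exact occupancy convention for endpoints. Under the convention recalled in the preliminaries, $R_2$ occupies $p_i$ at its departure time $a'$, occupies $c$ on $[a'+1,b']$, and occupies $q_{j'}$ at $b'$. The truncated path $c\xrightarrow{b'}q_{j'}$ should occupy $c$ only at its departure time $b'$ (or on some subset of $[a'+1,b']$) and $q_{j'}$ at $b'$. Its occupancy is therefore a subset of the occupancy of $R_2$, which is already disjoint from every other path. I would check this subset property carefully against the endpoint definition of Chakraborty et al.; if the convention instead charges the starting vertex $c$ at some time outside $[a'+1,b']$, I would fall back on a single vertex path or choose the other reused merge to truncate.

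If a leaf lies on $m\ge 3$ merge paths, repeating the step handles it. Termination follows since $\rho$ is a nonnegative integer. The hypothesis that $\mathcal P$ contains a merge path plays no role beyond making the statement nontrivial.
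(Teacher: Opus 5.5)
Your proposal is correct and is essentially the paper's exchange argument: keep one merge through the reused leaf, replace each other such merge by a shorter path covering its partner leaf, and note that occupancy only shrinks while the reuse count strictly drops. The paper replaces the discarded merge by the singleton path on the partner leaf rather than your single-edge path. That removes all center occupancy and so avoids the endpoint-convention check you flagged; your fallback to a single-vertex path is exactly the paper's choice.
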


\begin{proof}
Suppose a leaf $\ell$ belongs to $t\ge2$ merge paths. Keep one of them. For each of the other $t-1$ merge paths, delete that path and replace it by the singleton consisting of its other leaf endpoint. The leaf $\ell$ remains covered by the retained merge, every other deleted endpoint remains covered by its singleton, and the center remains covered because at least one merge is retained. The number of paths is unchanged. Moreover, each replacement removes center occupancy and therefore cannot create a new temporal conflict. Repeating this operation strictly decreases the number of leaf--merge incidences involving reused leaves, so the process terminates with the required cover.
\end{proof}

\begin{proposition}[Canonical form on temporal oriented stars]\label{prop:canonical-star}
If a \TDPC{} solution on a temporal oriented star contains at least one two-edge merge path, then there is an optimal solution of the same cardinality in which (i) every leaf belongs to at most one merge path, (ii) every non-merged leaf is covered by a singleton path, and (iii) no single-edge path is used. Consequently, if $M$ merges are selected, the cover has exactly $r+s-M$ paths. Thus minimizing the number of paths is equivalent to maximizing the number of leaf-unique merge realizations with pairwise disjoint occupancy intervals.
\end{proposition}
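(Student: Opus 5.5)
We start from an optimal \TDPC{} solution $\mathcal P$ on the star that contains at least one two-edge merge path. We then transform it into the canonical form in three stages, never increasing the number of paths and never creating a temporal conflict.

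\emph{Stage 1.} Apply \cref{lem:dereuse-star} to $\mathcal P$. This gives a cover of the same cardinality in which every leaf lies on at most one merge path. The lemma's proof only deletes center occupancy, so at least one merge survives. Thus (i) holds, and the center $c$ is still covered by a merge path.

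\emph{Stage 2.} Remove all single-edge paths and all singleton paths $\{c\}$. Each single-edge path $p_i\to c$ or $c\to q_j$ is replaced by the singleton path on its leaf. A singleton path $\{c\}$ is deleted outright, since $c$ is already covered by a retained merge. Both replacements remove occupancy of $c$ and so cannot create conflicts, and neither increases the count.

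Leaves may still be covered redundantly, say by a merge and by a singleton, or by two singletons. Delete the redundant copies; this only decreases the count, and since $\mathcal P$ was optimal it can in fact never happen. After this cleanup, every non-merged leaf is covered by exactly one singleton path, which gives (ii) and (iii). The cover remains optimal, because the count never increased and started at the optimum.

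\emph{Counting.} Every path is now either a merge, using two leaves, or a singleton leaf. With $M$ leaf-unique merges, the merges cover $2M$ of the $r+s$ leaves, and the remaining $r+s-2M$ leaves are singletons. The total number of paths is therefore $M+(r+s-2M)=r+s-M$.

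\emph{Equivalence.} Conversely, take any family of $M\ge1$ leaf-unique merge realizations with pairwise disjoint occupancy intervals, and add singletons for the other leaves. This is a feasible \TDPC{} cover: singletons at leaves occupy distinct vertices, and the merges are disjoint at $c$ by assumption and use distinct leaves. Its size is $r+s-M$. So whenever a merge exists, minimizing the cover size is the same as maximizing $M$.

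The degenerate case deserves a brief remark. If no merge is feasible at all, the hypothesis of the proposition fails, and the optimum must cover $c$ by a separate path; this case is outside the statement. However, the comparison $r+s-M$ versus the merge-free optimum should be noted when the equivalence is applied. The main care point is Stage 2: one must check that deleting $\{c\}$ or shortening a single-edge path never uncovers $c$, which is exactly why at least one merge must be retained from Stage 1.
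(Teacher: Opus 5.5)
Your proof is correct and follows the paper's route: apply \cref{lem:dereuse-star}, replace every single-edge path by the singleton on its leaf (this only removes occupancy at $c$, which stays covered by a surviving merge), and count $M+(r+s-2M)=r+s-M$. Your extra steps fill in details the paper leaves implicit: deleting a singleton $\{c\}$, using optimality to rule out redundant coverage of a leaf, and explicitly building a cover of size $r+s-M$ from any leaf-unique family of merges with disjoint occupancy intervals.
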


\begin{proof}
Apply \cref{lem:dereuse-star}. Once a merge path is present, the center $c$ is already covered. Replacing every remaining single-edge path by the singleton consisting of its leaf preserves the number of paths, preserves coverage, and can only remove occupancy at $c$. In the resulting cover, $M$ merges cover $2M$ leaves and every other leaf is a singleton, so the number of paths is $M+(r+s-2M)=r+s-M$.
\end{proof}

For a fixed leaf pair $(p_i,q_j)$, realizations can be pruned by interval containment.

\begin{lemma}[Dominated realizations]\label{lem:domination-star}
Fix $p_i$ and $q_j$. Suppose $(a,b)$ and $(a',b')$ are valid realizations with $a'\ge a$ and $b'\le b$. Then $(a,b)$ is dominated: every feasible canonical merge family containing $[a+1,b]$ remains feasible after replacing it by $[a'+1,b']$.
\end{lemma}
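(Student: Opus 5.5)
The plan is to check directly that the replacement keeps every feasibility condition of a canonical merge family. By \cref{prop:canonical-star}, a canonical family is feasible exactly when three things hold: each member is a valid realization, leaves are used at most once, and the occupancy intervals are pairwise disjoint. So fix a feasible canonical family $\mathcal F$ containing the realization $(i,j,a,b)$ with occupancy interval $[a+1,b]$. Let $\mathcal F'$ be obtained by substituting $(i,j,a',b')$ for it. I will verify the three conditions for $\mathcal F'$ in turn.

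\emph{Validity.} By hypothesis $(a',b')$ is a valid realization for the pair $(p_i,q_j)$. That is, $a'\in A_i$, $b'\in B_j$ and $a'<b'$. So the new member is a genuine two-edge temporal path $p_i\xrightarrow{a'}c\xrightarrow{b'}q_j$, and its occupancy interval $[a'+1,b']$ is nonempty.

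\emph{Leaf uniqueness.} The replacement uses exactly the same leaves $p_i$ and $q_j$ as the realization it replaces. Hence the multiset of leaf--merge incidences is unchanged, and condition (i) of \cref{prop:canonical-star} still holds. The number of merges is also unchanged, so the cover size $r+s-M$ stays the same. Conditions (ii) and (iii) are unaffected, and the center remains covered because a merge is still present.

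\emph{Disjointness.} This is the only substantive step. From $a'\ge a$ we get $a'+1\ge a+1$, and together with $b'\le b$ this gives $[a'+1,b']\subseteq[a+1,b]$ as integer intervals. Every other member of $\mathcal F$ has an occupancy interval disjoint from $[a+1,b]$, so it is also disjoint from the subinterval $[a'+1,b']$. Pairs of members not involving the replaced one are untouched. Thus all occupancy intervals in $\mathcal F'$ remain pairwise disjoint.

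Leaves are occupied only at endpoint times, and no other path passes through them. So, as in \cref{prop:canonical-star}, temporal disjointness reduces to disjointness of occupancy intervals at $c$. Hence $\mathcal F'$ is feasible. I expect no real obstacle here; the only care needed is to rely on the reduction in \cref{prop:canonical-star}, so that conflicts at leaves need not be rechecked.
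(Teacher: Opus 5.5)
Your proposal is correct and matches the paper's proof: both arguments rest on the replacement using the same two leaves $p_i,q_j$ and on $[a'+1,b']\subseteq[a+1,b]$, so no new leaf conflict or center-occupancy conflict can arise. You additionally spell out the validity and leaf-uniqueness checks that the paper leaves implicit.
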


\begin{proof}
The replacement interval satisfies $[a'+1,b']\subseteq[a+1,b]$ and uses the same two leaves. Hence it creates neither a new leaf conflict nor a new center-occupancy conflict.
\end{proof}

\subsection{One-sided multi-label stars}

The single-label result extends substantially: it is enough for only one side of the oriented star to be single-label.

\begin{theorem}[One-sided multi-label stars]\label{thm:one-sided-star}
Suppose every outgoing edge $c\to q_j$ has exactly one temporal label, while the incoming label sets $A_i$ are arbitrary finite sets. Then \TDPC{} on the temporal oriented star is solvable in polynomial time by bipartite matching.
\end{theorem}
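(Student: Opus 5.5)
The plan is to reduce, via \cref{prop:canonical-star}, to maximizing the number of leaf-unique merge realizations with pairwise disjoint occupancy intervals, and then to show that an optimal such family can be normalized so that the remaining constraints decouple into a bipartite matching. Write $B_j=\{b_j\}$. Let $\beta^{(1)}<\cdots<\beta^{(m)}$ be the distinct values among the $b_j$, and group the out-leaves accordingly: group $g$ is $Q_g=\{q_j:b_j=\beta^{(g)}\}$. Set $\beta^{(0)}=-\infty$ and define the \emph{window} of group $g$ as $W_g=[\beta^{(g-1)},\beta^{(g)}-1]$. The windows partition all integers below $\beta^{(m)}$.

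First I make two observations about a canonical family. Two merges ending in the same group both occupy $c$ at time $\beta^{(g)}$, so each group contributes at most one merge. If merges end in groups $g<g'$, their intervals $[a+1,\beta^{(g)}]$ and $[a'+1,\beta^{(g')}]$ are disjoint exactly when $a'\ge\beta^{(g)}$. The key step is a \emph{window normalization}. Take a merge $p_i\xrightarrow{a}c\xrightarrow{\beta^{(g)}}q$ with $a\in W_{g'}$ for some $g'<g$. Its occupancy interval contains every time $\beta^{(g')},\ldots,\beta^{(g-1)}$, so by disjointness no other merge ends in groups $g',\ldots,g-1$. I replace $q$ by some $q'\in Q_{g'}$, giving the merge $p_i\xrightarrow{a}c\xrightarrow{\beta^{(g')}}q'$, and turn $q$ into a singleton. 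This is valid because $a<\beta^{(g')}$. The new interval is contained in the old one, so no center conflict arises, as in \cref{lem:domination-star}. The leaf $q'$ was previously unmerged, since its group had no merge. The count is unchanged. Each such step strictly decreases $\sum(g-g')$ over merges, so repeating it terminates with a family in which every merge ending in group $g$ uses an incoming label in $W_g$.

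Conversely, any family of window-local merges using distinct groups and distinct in-leaves is feasible, because its intervals lie in the pairwise disjoint ranges $[\beta^{(g-1)}+1,\beta^{(g)}]$. So the maximum number of merges equals the maximum matching in the bipartite graph between in-leaves $p_i$ and groups $g$, where $p_i g$ is an edge iff $A_i\cap W_g\neq\emptyset$. This graph is built in polynomial time, and Hopcroft--Karp computes the matching.

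The answer is $r+s-\mu$ when the matching size $\mu\ge1$, by \cref{prop:canonical-star}. The degenerate case $\mu=0$ is handled separately: no merge exists, so the cover needs $r+s$ paths, using a single-edge path to cover $c$ if the star has an edge. A solution is recovered by realizing each matched edge with any label in the window and any leaf of the group.

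The main obstacle is the window normalization, specifically checking that the target group $g'$ is free and that the swap creates no new leaf reuse. I would verify this carefully from the containment $\beta^{(g')},\ldots,\beta^{(g-1)}\in[a+1,\beta^{(g)}]$.
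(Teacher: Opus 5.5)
Your proposal is correct and takes essentially the paper's route: the same bipartite graph between incoming leaves and the gaps $[\beta_{k-1},\beta_k)$ between consecutive outgoing label values, the same snapping of each merge's end to the first outgoing label after its start, and the same disjoint-window argument for the converse. The differences are minor. You snap merges one at a time, using a potential function and checking that the target group is free, whereas the paper gets injectivity directly: two merges starting in the same gap $G_k$ would both occupy $c$ at time $\beta_k$. Your separate handling of the case $\mu=0$ is a harmless extra.
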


\begin{proof}
Let the distinct outgoing label values be
\[
\beta_1<\beta_2<\cdots<\beta_m,
\]
and set $\beta_0=-\infty$. Define gaps $G_k=[\beta_{k-1},\beta_k)$. Construct a bipartite graph $H$ with left side $P$ and right side $\{G_1,\ldots,G_m\}$, where
\[
p_iG_k\in E(H)
\quad\Longleftrightarrow\quad
A_i\cap G_k\neq\emptyset.
\]
We claim that the maximum number of merges in a canonical \TDPC{} solution equals $\nu(H)$.

First consider any canonical feasible merge family. Let a selected merge start at $a\in A_i$, with $a\in G_k$, and end at some outgoing label $b\ge\beta_k$. Replace its end by an outgoing edge carrying label $\beta_k$. Its occupancy interval shrinks from $[a+1,b]$ to $[a+1,\beta_k]$, so no conflict is introduced. Two temporally disjoint selected merges cannot have their start labels in the same gap $G_k$: each would contain the point $\beta_k$ in its occupancy interval. Hence distinct selected merges snap to distinct gaps. By \cref{prop:canonical-star}, they also use distinct incoming leaves. Therefore the selected merges define a matching in $H$ of the same cardinality.

Conversely, let $M$ be a matching in $H$. For every matched edge $p_iG_k$, choose any $a\in A_i\cap G_k$ and choose one outgoing leaf whose unique label is $\beta_k$. Distinct matched gaps have distinct label values, hence correspond to distinct outgoing leaves. If $k<\ell$, then a start selected from $G_\ell$ satisfies $a_\ell\ge\beta_{\ell-1}\ge\beta_k$, so the interval $[a_\ell+1,\beta_\ell]$ begins strictly after $\beta_k$, the end of the earlier interval. Thus the constructed merge intervals are pairwise disjoint. This gives a feasible canonical merge family of size $|M|$.

Therefore the maximum merge count is $\nu(H)$, and \cref{prop:canonical-star} yields an optimum cover of size $r+s-\nu(H)$.
\end{proof}

\begin{corollary}[Symmetric one-sided case]\label{cor:one-sided-symmetric}
If every incoming edge has exactly one label and the outgoing edges may have arbitrary finite label sets, then \TDPC{} on the temporal oriented star is polynomial-time solvable.
\end{corollary}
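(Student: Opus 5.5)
The plan is to mirror the proof of \cref{thm:one-sided-star} with time reversed. The incoming labels are now single values, and the gaps are placed between them.

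\textbf{Gaps and the bipartite graph.} Let the distinct incoming label values be $\alpha_1<\cdots<\alpha_m$, and set $\alpha_{m+1}=+\infty$. Define the gaps $G_k=(\alpha_k,\alpha_{k+1}]$. Build a bipartite graph $H$ with left side $Q$ and right side $\{G_1,\ldots,G_m\}$. Put $q_jG_k\in E(H)$ exactly when $B_j\cap G_k\neq\emptyset$. The claim is that the maximum merge count equals $\nu(H)$. By \cref{prop:canonical-star}, an optimum then has $r+s-\nu(H)$ paths, or $r+s$ if $\nu(H)=0$, where the cover is singletons plus an arbitrary path through $c$. That last case must be handled by comparing with covers that have no merge. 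This is the same convention implicitly used in the theorem, so I would treat it the same way.

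\textbf{Merges give a matching.} Take a canonical feasible merge family. Consider a merge ending at $b\in B_j$ with $b\in G_k$; it started at some incoming label $a<b$. Since $a$ is an incoming label value below $b$, we have $a\le\alpha_k$. Replace its start by an incoming leaf with label $\alpha_k$. The occupancy interval shrinks from $[a+1,b]$ to $[\alpha_k+1,b]$, so no conflict is created, in the spirit of \cref{lem:domination-star}.

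Two disjoint merges cannot both end in the same gap $G_k$. Both would have start label at most $\alpha_k$ and end label greater than $\alpha_k$, so both would occupy time $\alpha_k+1$.

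The step needing care is this: after snapping, two merges might land on the same incoming leaf. They cannot, because distinct gaps have distinct $\alpha_k$. Outgoing leaves are distinct by canonicity. So the merges map injectively onto edges of $H$ and form a matching.

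\textbf{A matching gives merges.} Conversely, for each matched edge $q_jG_k$, pick some $b\in B_j\cap G_k$ and an incoming leaf whose label is $\alpha_k$. This gives the interval $[\alpha_k+1,b]$. Take $k<\ell$. Then $b\le\alpha_{k+1}\le\alpha_\ell<\alpha_\ell+1$, so the intervals are disjoint. The incoming leaves are distinct, and the outgoing leaves are distinct because the matching is a matching. Hence the merge family is feasible.

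\textbf{Running time.} The graph $H$ has polynomial size, and maximum bipartite matching is polynomial. I expect the main obstacle to be only the bookkeeping of half-open gap endpoints under strict temporality. These points are exactly where the reversal of \cref{thm:one-sided-star} must be checked.
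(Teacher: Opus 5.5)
Your proof is correct, but it is organized differently from the paper's. The paper does not redo the matching argument. It reverses every arc and replaces each label $t$ by $T-t$, which exchanges the roles of incoming and outgoing leaves, and then applies \cref{thm:one-sided-star} as a black box. You instead carry out that symmetry by hand: you place the gaps $(\alpha_k,\alpha_{k+1}]$ between the single incoming values and match outgoing leaves to gaps.

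The paper's route is a one-liner, but it leaves implicit the check that the reversal respects the occupancy convention. A center interval $[a+1,b]$ becomes $[T-b+1,T-a]$, which is the image of $[a+1,b]$ under $t\mapsto T+1-t$. Hence disjointness is preserved, and by \cref{prop:canonical-star} only center intervals matter. Your direct version avoids this check and makes the half-open endpoint bookkeeping explicit, at the cost of repeating the proof.

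One small simplification: in the merges-to-matching direction, the snapping step and the concern about two merges landing on the same incoming leaf are unnecessary. The matching lives on $Q\times\{G_1,\ldots,G_m\}$. Sending each merge $(i,j,a,b)$ to the edge $q_jG_k$ with $b\in G_k$ already gives a matching. The $Q$-endpoints are distinct by canonicity. The gap endpoints are distinct because two merges in one gap would share the time $\alpha_k+1$.
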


\begin{proof}
Reverse all arcs and reverse the time order, for example by replacing every label $t$ by $T-t$ for a sufficiently large constant $T$. Strict temporal paths and temporal disjointness are preserved, while incoming and outgoing roles are exchanged. Apply \cref{thm:one-sided-star}.
\end{proof}

The previously considered single-label star case is an immediate special case of \cref{thm:one-sided-star}. In that more restricted setting, the matching structure further collapses to the familiar earliest-finish greedy algorithm.

\subsection{Bounded span and an XP algorithm}

The following parameter captures how far a candidate merge reaches across labels belonging to other leaves.

\begin{definition}[Span]\label{def:span-star}
For a feasible merge realization $x=(i,j,a,b)$, define its span $\sigma(x)$ as the number of foreign leaf-label occurrences strictly between $a$ and $b$:
\[
\sigma(x)=\bigl|\{(\ell,t):\ \ell\notin\{p_i,q_j\},\ t\in\lambda(\ell),\ a<t<b\}\bigr|.
\]
Repeated numerical values on different leaves are counted as distinct occurrences. The span of the instance is
\[
\sigma_{\max}=\max_x \sigma(x),
\]
where the maximum ranges over all feasible merge realizations, including dominated ones.
\end{definition}

The next lemma shows that a large feasible merge family necessarily creates a long cross-realization somewhere in the closure of the star.

\begin{lemma}[Span bound]\label{lem:span-bound}
Let $\mathcal F$ be a feasible canonical merge family of size $k$. Then the instance contains a feasible merge realization of span at least $2k-2$. Consequently,
\[
k\le \left\lfloor\frac{\sigma_{\max}}{2}\right\rfloor+1.
\]
\end{lemma}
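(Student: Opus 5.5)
The plan is to take the feasible canonical family $\mathcal F$, sort it along the time axis, and exhibit one explicit ``cross'' realization that joins the incoming leaf of the earliest merge to the outgoing leaf of the latest merge. The foreign labels strictly inside that long realization will be exactly the labels used by all the intermediate merges, and leaf uniqueness from \cref{prop:canonical-star} will guarantee that they all lie on distinct foreign leaves.

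\textbf{Ordering the family.} The case $k=1$ is trivial, since the bound $2k-2=0$ holds for every realization, so assume $k\ge2$. Write the merges of $\mathcal F$ as $(i_m,j_m,a_m,b_m)$ for $m=1,\ldots,k$. Their occupancy intervals $[a_m+1,b_m]$ are pairwise disjoint and nonempty, so they can be ordered by time. After relabeling I will have
\[
a_1<b_1\le a_2<b_2\le\cdots\le a_k<b_k .
\]
Disjointness of two consecutive integer intervals $[a_m+1,b_m]$ and $[a_{m+1}+1,b_{m+1}]$ gives exactly $b_m\le a_{m+1}$. By canonicity (\cref{prop:canonical-star}), the leaves $p_{i_1},\ldots,p_{i_k}$ are pairwise distinct, and so are $q_{j_1},\ldots,q_{j_k}$.

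\textbf{The cross realization.} Consider $x=(i_1,j_k,a_1,b_k)$. It is feasible because $a_1\in A_{i_1}$, $b_k\in B_{j_k}$, and $a_1<b_k$. I will count its foreign occurrences as follows.
\begin{itemize}
\item For each $m\in\{2,\ldots,k\}$, the pair $(p_{i_m},a_m)$ is foreign. Indeed $p_{i_m}\ne p_{i_1}$, and $p_{i_m}$ is an incoming leaf, so it differs from $q_{j_k}$. Its time satisfies $a_1<b_1\le a_m\le a_k<b_k$.
\item For each $m\in\{1,\ldots,k-1\}$, the pair $(q_{j_m},b_m)$ is foreign. Here $q_{j_m}\ne q_{j_k}$, and its time satisfies $a_1\le a_m<b_m\le a_k<b_k$.
\end{itemize}
These $2(k-1)$ pairs sit on $2(k-1)$ distinct leaves, so they are distinct occurrences. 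This holds even when values coincide, for instance $b_m=a_{m+1}$, because \cref{def:span-star} counts occurrences on different leaves separately. Hence $\sigma(x)\ge 2k-2$.

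\textbf{The bound on $k$.} Since $\sigma_{\max}$ ranges over all feasible realizations, including dominated ones, we get $\sigma_{\max}\ge\sigma(x)\ge2k-2$. Rearranging gives $k\le\sigma_{\max}/2+1$, and integrality of $k$ yields the floor.

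\textbf{Main obstacle.} The step needing care is strictness at the two ends of the window $(a_1,b_k)$. The extreme labels $a_1$ and $b_k$ belong to the endpoints of $x$ itself and are excluded, while every intermediate label must lie strictly inside. This comes down to two facts: the occupancy intervals are nonempty, giving $a_m<b_m$, and disjointness takes the form $b_m\le a_{m+1}$. Together these give the needed strict inequalities at both ends.
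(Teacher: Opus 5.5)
Your proof is correct and takes essentially the same approach as the paper. Both build the cross realization that joins the incoming leaf of the earliest-starting merge to the outgoing leaf of the latest-ending merge, and both count the other $2k-2$ endpoint occurrences as foreign via leaf uniqueness; the only cosmetic difference is that you get the strict inequalities by ordering the disjoint intervals ($b_m\le a_{m+1}$), while the paper notes directly that two disjoint occupancy intervals cannot share a start value or an end value.
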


\begin{proof}
Write the selected realizations as
\[
x_t=(i_t,j_t,a_t,b_t),\qquad t=1,\ldots,k.
\]
By canonical feasibility, the incoming leaves $p_{i_t}$ are pairwise distinct, the outgoing leaves $q_{j_t}$ are pairwise distinct, and the occupancy intervals $[a_t+1,b_t]$ are pairwise disjoint.

Let $a_{\min}=\min_t a_t$, attained by a realization whose incoming leaf is $p_{i^*}$, and let $b_{\max}=\max_t b_t$, attained by a realization whose outgoing leaf is $q_{j^*}$. Since every selected realization satisfies $a_t<b_t$, we have $a_{\min}<b_{\max}$. Hence Cartesian closure gives the feasible cross-realization
\[
x^*=(i^*,j^*,a_{\min},b_{\max}).
\]

No two selected realizations can share the same start value: if $a_t=a_{t'}$, both occupancy intervals contain $a_t+1$. Likewise, no two selected realizations can share the same end value, since both would contain that end time. Therefore every selected start other than $a_{\min}$ lies strictly between $a_{\min}$ and $b_{\max}$, and every selected end other than $b_{\max}$ does as well.

These $2k-2$ endpoint occurrences are all foreign to $x^*$. Indeed, row uniqueness implies that every other selected start belongs to an incoming leaf different from $p_{i^*}$, and column uniqueness implies that every other selected end belongs to an outgoing leaf different from $q_{j^*}$. Incoming and outgoing leaf sets are disjoint, so cross-side numerical coincidences do not affect foreignness. Thus
\[
\sigma(x^*)\ge 2k-2.
\]
Since $\sigma(x^*)\le\sigma_{\max}$, the stated bound follows.
\end{proof}

\begin{corollary}[XP parameterized by span]\label{cor:span-xp}
\TDPC{} on temporal oriented stars is in XP parameterized by $\sigma_{\max}$. More precisely, if $\mathcal C$ is the set of feasible merge realizations, an optimum canonical merge family can be found in time
\[
|\mathcal C|^{\lfloor\sigma_{\max}/2\rfloor+1}\,\operatorname{poly}(|\mathcal C|).
\]
In particular, for every fixed span bound $\sigma_{\max}\le d$, the problem is polynomial-time solvable.
\end{corollary}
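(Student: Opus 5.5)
We prove the corollary by combining \cref{prop:canonical-star} with \cref{lem:span-bound} and then enumerating bounded-size merge families exhaustively. First, we build the set $\mathcal C$ of all feasible merge realizations $(i,j,a,b)$ with $a\in A_i$, $b\in B_j$, $a<b$. Its size is at most $\sum_i|A_i|\cdot\sum_j|B_j|$, which is polynomial in the input. From $\mathcal C$ we also compute $\sigma_{\max}$ in polynomial time, since each span is a direct count over the labels. If $\mathcal C$ is empty, no merge is possible, and the optimum consists of covering each leaf and the center without merges. We treat this degenerate case, together with the case where using zero merges is better, by directly comparing against the best merge-free cover, which takes $r+s$ paths once the center is covered by any single-edge path (or just $1$ if the star has no edges).

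Second, by \cref{prop:canonical-star}, whenever some optimum uses a merge, minimizing the cover size is equivalent to maximizing the size $k$ of a canonical merge family. Such a family has pairwise distinct incoming leaves, pairwise distinct outgoing leaves, and pairwise disjoint occupancy intervals $[a+1,b]$. By \cref{lem:span-bound}, every such family satisfies $k\le\lfloor\sigma_{\max}/2\rfloor+1=:k_0$.

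Third, we enumerate all subsets of $\mathcal C$ of size at most $k_0$; there are at most $\sum_{k\le k_0}|\mathcal C|^k=O(|\mathcal C|^{k_0})$ of them for $|\mathcal C|\ge2$. For each subset we check leaf uniqueness and pairwise interval disjointness in $\operatorname{poly}(|\mathcal C|)$ time, and we keep the largest feasible one. The resulting cover, consisting of the chosen merges plus singletons, has $r+s-k$ paths. We then output the better of this cover and the merge-free option. This gives the stated running time $|\mathcal C|^{\lfloor\sigma_{\max}/2\rfloor+1}\operatorname{poly}(|\mathcal C|)$, which is XP in $\sigma_{\max}$ and polynomial for every fixed bound $d$.

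The main obstacle is mostly bookkeeping. We must make sure that the exhaustive search really ranges over canonical families, so that \cref{lem:span-bound} applies; this is exactly what \cref{lem:dereuse-star} and \cref{prop:canonical-star} guarantee for some optimum. We must also handle the comparison with merge-free solutions correctly, where the center has to be covered by a single-edge path rather than a merge.
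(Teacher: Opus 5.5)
Your proposal is correct and follows essentially the same route as the paper: you bound the size of any canonical merge family by $\lfloor\sigma_{\max}/2\rfloor+1$ via \cref{lem:span-bound}, enumerate all subsets of $\mathcal C$ up to that size while checking leaf uniqueness and interval disjointness, and invoke \cref{prop:canonical-star} for optimality. Your extra treatment of the merge-free case is harmless bookkeeping that the paper leaves implicit; it only matters when $\mathcal C=\emptyset$, since a single merge already gives $r+s-1<r+s$ paths.
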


\begin{proof}
By \cref{lem:span-bound}, every feasible canonical merge family has size at most
\[
K=\left\lfloor\frac{\sigma_{\max}}{2}\right\rfloor+1.
\]
Enumerate all subsets of $\mathcal C$ of size at most $K$ and test whether the selected realizations use pairwise distinct incoming leaves, pairwise distinct outgoing leaves, and pairwise disjoint occupancy intervals. The largest feasible subset is optimal by \cref{prop:canonical-star}. The exponent depends on $\sigma_{\max}$, so this is an XP algorithm rather than an FPT algorithm.
\end{proof}

The bound is tight in the sense that $k$ pairwise disjoint merges can force a cross-realization of span exactly $2k-2$. For example, take $k$ numerically separated singleton merges with labels $a_t=2t-1$ and $b_t=2t$. The realization joining the first incoming leaf at time $1$ to the last outgoing leaf at time $2k$ has exactly the other $2k-2$ selected endpoint occurrences in its interior.

\subsection{Remaining open case}

The preceding results substantially narrow the unresolved boundary for \TDPC{} at vertex-cover number one. The problem is polynomial whenever either side of the oriented star is single-label, and every fixed maximum-span class is polynomial by \cref{cor:span-xp}. The unrestricted two-sided multi-label case remains open; here $\sigma_{\max}$ may grow with the instance size, so the XP bound does not resolve the general case.

\section{Computational validation}

The hardness reductions were supplemented by independent counterexample searches designed to target the structural lemmas rather than to replace proof. The validation strategy deliberately allowed path forms that the soundness arguments must exclude, including repeated row selections, later-column choices, unintended target assignments, block jumps, and paths terminating in the wrong sink family. The main regimes are summarized in \cref{tab:validation}.

\begin{table}[ht]
\centering
\caption{Summary of computational counterexample searches.}
\label{tab:validation}
\small
\begin{tabular}{@{}llll@{}}
\toprule
Problem & Method & Instance sizes & Outcome \\
\midrule
\TPC{} & Full temporal-path enumeration & exhaustive $n=2$, random $n=3$ & no mismatch \\
\TPC{} & Structural/Hall-condition search & exhaustive $n=3$, random $n=4,5$ & no mismatch \\
\TPC{} & Independent MILP set-cover check & enumerated small instances & no mismatch \\
\TDPC{} & Generic path/occupancy verifier & exhaustive $n=2$, random $n=2,3$ & no mismatch \\
\bottomrule
\end{tabular}
\end{table}

For \TPC{}, the reported search suite covered 1,920 balanced \DNMTS{} instances across the full-path and structural searches. The full enumerator constructs the temporal graph itself, enumerates all strict temporal paths from each source, and searches for a cover without assuming the prefix-quota or majorization lemmas. The structural verifier allows repeated selected rows, later columns, arbitrary feasible target assignments, and then checks the exact chain-graph Hall conditions for the remaining filler cells. For \TDPC{}, the verifier enumerates strict temporal paths directly, computes the exact occupancy of each vertex under the formal interval convention, and permits attempted block jumps as well as main paths ending in guard sinks and guard paths ending in main sinks.

These computations are supplementary validation only. The complexity results rest on the proofs above. The validation scripts, generated instances, and detailed test reports can be supplied as supplementary material with the submission and can also be deposited in a public repository.

\section{Conclusion}

We resolve the vertex-cover-two question left open by Cioni et al.~\cite{cioni2026}: both \TPC{} and \TDPC{} remain NP-complete on temporal DAGs whose underlying undirected graph has vertex-cover number exactly two. The two proofs require different forcing mechanisms, reflecting the fact that path overlap is permitted in \TPC{} but forbidden temporally in \TDPC{}. For \TPC{}, the result is tight: together with the polynomial-time algorithm on temporal oriented trees, it yields an exact transition from polynomial-time solvability at vertex-cover number at most one to NP-completeness at vertex-cover number two. For \TDPC{}, we sharpen the vertex-cover-one boundary beyond the single-label setting. On temporal oriented stars, leaf reuse can be eliminated without increasing the cover size; this yields a canonical merge formulation. We then show polynomial-time solvability whenever either side of the star is single-label, even with arbitrary label multiplicity on the opposite side. We also introduce a closure-inclusive span parameter and prove that any feasible family of $k$ merges forces a feasible realization of span at least $2k-2$, yielding an XP algorithm parameterized by maximum span and polynomial-time solvability for every fixed span bound. The unrestricted two-sided multi-label star case remains open.

\paragraph{Declaration of interests.} The author declares no competing interests.

\paragraph{Data and code availability.} No empirical research dataset was generated or analyzed in this study. The mathematical results are self-contained. Detailed reports from the supplementary computational counterexample searches are provided as supplementary material. The computational scripts used during exploratory validation were working research code rather than an archival software release and are available from the author upon reasonable request.

\paragraph{Declaration of generative AI and AI-assisted technologies in the manuscript preparation process.} During the preparation of this work, the author used OpenAI ChatGPT and Anthropic Claude to assist with proof auditing, adversarial counterexample searches, structural and language editing, and preparation of the manuscript. All mathematical claims, constructions, references, and final text were reviewed and verified by the author, who takes full responsibility for the content of the article.

\end{document}